\newtheorem{theorem}{Theorem}[section]
\newtheorem*{theorem*}{Theorem}
\newtheorem{corollary}[theorem]{Corollary}
\newtheorem{problem}{Problem}
\newtheorem{lemma}[theorem]{Lemma}
\newtheorem{remark}[theorem]{Remark}
\definecolor{mygreen}{RGB}{0,128,0}
\definecolor{mysilver}{RGB}{220,220,220}
\def\avg{\text{AvgRank}}
\def\one{{\mathbf{1}}}
\def\R{{\mathcal R}}
\def\ZZ{{\mathbb Z}}
\def\P{{\mathcal P}}
\def\X{{\mathcal X}}
\def\C{{\mathcal H}}
\def\Z{{\mathcal Z}}
\def\S{{\mathcal R}}
\def\E{{\mathcal E}}
\def\H{{\mathcal H}}
\def\B{{\mathcal B}}
\def\V{{\mathcal V}}
\def\W{{\mathcal W}}
\newcommand{\CAexp}{\textsc{Min-Avg$_{\text{exp}}$ HR}}
\newcommand{\CAexpOpt}{\textsc{Min-Avg$_{\text{exp}}$ HR opt}}
\newcommand{\CAred}{\textsc{Min-Avg$_{\text{red}}$ HR}}
\newcommand{\CAredOpt}{\textsc{Min-Avg$_{\text{red}}$ HR opt}}
\newcommand{\CAexpsub}{\textsc{Min-Avg$_{\text{exp}}^{\text{sub}}$ HR}}
\newcommand{\CAexpsubOpt}{\textsc{Min-Avg$_{\text{exp}}^{\text{sub}}$ HR opt}}
\newcommand{\SMexpsub}{\textsc{Min-Avg$_{\text{exp}}^{\text{sub}}$ SM}}
\newcommand{\CAredsub}{\textsc{Min-Avg$_{\text{red}}^{\text{sub}}$ HR}}
\newcommand{\CAredsubOpt}{\textsc{Min-Avg$_{\text{red}}^{\text{sub}}$ HR opt}}
\newcommand{\Maxcard}{\textsc{Max-Card HRTI}}
\newcommand{\Maxcardexp}{\textsc{Max-Card$_{\text{exp}}^{\text{sub}}$ HRI}}
\newcommand{\Maxcardred}{\textsc{Max-Card$_{\text{red}}^{\text{sub}}$ SMI}}
\newcommand{\MinW}{\textsc{Min-w SMT}}
\newcommand{\MinWOpt}{\textsc{Min-w SMT opt}}
\newcommand{\CA}{CA}
\newcommand{\ie}{\emph{i.e.}}
\newcommand{\eg}{\emph{e.g.}}
\newcommand{\boxxx}[1]
 {\begin{center}\fbox{\begin{minipage}{40em}#1\smallskip\end{minipage}}\end{center}}
\begin{document}

	\title{Capacity Variation in the Many-to-one Stable Matching\\ \footnotesize{To  Gerhard Woeginger (1964-2022), an outstanding computer scientist}}
	
	\author{Federico Bobbio
	\thanks{CIRRELT and DIRO, Université de Montr\'eal. CERC, Polytechnique Montr\'eal. {\tt federico.bobbio@umontreal.ca}}
	\and Margarida Carvalho
	\thanks{CIRRELT and DIRO, Universit\'e de Montr\'eal. {\tt carvalho@iro.umontreal.ca}}
	\and Andrea Lodi
	\thanks{Jacobs Technion-Cornell Institute, Cornell Tech. {\tt andrea.lodi@cornell.edu}}
	\and 	Alfredo Torrico		
	\thanks{CERC Data Science, Polytechnique Montr\'eal. {\tt alfredo.torrico-palacios@polymtl.ca}}
	}
\date{\vspace{-1em}}

\maketitle
\begin{abstract}
    The many-to-one stable matching problem provides the fundamental abstraction of several real-world matching markets such as  school choice and hospital-resident allocation. The agents on both sides are often referred to as residents and hospitals. The classical setup assumes that the agents rank the opposite side and that the capacities of the hospitals are fixed. 


It is known that increasing the capacity of a single hospital improves the residents' final allocation. On the other hand, reducing the capacity of a single hospital deteriorates the residents' allocation. In this work, we study the computational complexity of finding the optimal variation of hospitals' capacities that leads to the best outcome for the residents, subject to stability and a capacity variation constraint.

First, we show that the decision problem of finding the optimal capacity expansion is NP-complete and the corresponding optimization problem is inapproximable within a certain factor. This result holds under strict and complete preferences, and even if we allocate extra capacities to disjoint sets of hospitals.
Second, we obtain analogous computational complexity results for the problem of capacity reduction.
Finally, we study the variants of these problems when the goal is to maximize the size of the final matching under incomplete preference lists.

\end{abstract}
\thispagestyle{empty}
\newpage

\maketitle

\algrenewcommand\algorithmicrequire{\textbf{Input:}}
\algrenewcommand\algorithmicensure{\textbf{Output:}}

\allowdisplaybreaks

\section{Introduction}\label{sec:introduction}




The stable matching problem has found multiple applications such as daycare admission in Denmark \cite{kennes2011daycare}, school and hospital-resident allocation in the USA \cite{abdulkadirouglu2005boston, abdulkadiroglu2006changing, abdulkadirouglu2005new,roth1984evolution,roth1999redesign}, school and university admission in Hungary \cite{biro2008student, biro2015college}, school admission in Singapore \cite{teo1999gale}, university admission in China \cite{zhang2010analysis}, Germany \cite{braun2010telling} and Spain \cite{romero1998implementation}, faculty recruitment in France \cite{baiou2003admissions}. The many-to-one stable matching problem (HR) consists of two sides---henceforth referred to as hospitals and residents---where hospitals have fixed and known capacities. Both sides have preferences over each other, and the goal of the decision-maker is to find an assignment such that, in each pair, both agents \emph{simultaneously} prefer each other over any other agent. The HR problem, and its multiple variants, have been widely studied in the literature by different disciplines: From a polyhedral  \cite{baiou2000stable, baiou2004student} and algorithmic  \cite{gale1962college} perspective, to geometry \cite{sethuraman2006many}, mathematical programming \cite{vate1989linear}, combinatorics \cite{knuth1976marriages}, fixed-point methods \cite{subramanian1994new} and graph theory \cite{balinski1997stable}. 

As mentioned, in the standard version of HR, the capacity of the hospitals are fixed and known in advance. The decision-maker in charge of the final assignment does not have control over these quota. However, there are multiple real-life situations in which the variation of the size of the market, expansion or reduction, could play a significant role. For example, when allocating couples in hospitals~\cite{roth1984evolution}, siblings in school~\cite{correa2021school}, scholarships or expenses reduction.

The idea of introducing new participants in the matching market has been previously studied through the lens of game theory and econonomics. 
This problem is known as \textit{entry comparative static}, and is usually assumed that the introduced agent is an independent entity with a certain preference list; the participants of the opposite side also rank this new agent. It has been shown that when a new agent is introduced, then the resulting matching is \textit{weakly better} (\ie, equal or better) for the agents of the opposite side  \cite{KelsoCrawford1982,GaleSotomayor1985,RothSotomayor1990}. On the mathematical programming domain, for the HR, the problem of deciding simultaneously capacity expansions  on the hospitals' side and a stable matching was first proposed in \cite{bobbio2021capacity}. Using integer programming, the authors demonstrated empirically that significantly better matchings for the residents can be obtained through the allocation of a few extra spots.

In this work, we study the computational complexity of the problem proposed in~\cite{bobbio2021capacity} as well as its counterpart, \ie, when reduction of the hospitals' capacity is required. Roughly speaking, for the expansion of the market we study the following question: 
\begin{quote}
    \it Given a non-negative integer number $B\in\ZZ_+$ of extra spots, which hospitals should the decision-maker expand the capacity of to obtain the best stable matching for the residents?
\end{quote}
In the second part of this work, we focus on the reduction of the market. Simply, we study the following question:
\begin{quote}
\it Given a non-negative integer number $B\in\ZZ_+$ of spots to be removed, which hospitals should the decision-maker reduce the capacity of in order to obtain the best stable matching for the residents?
\end{quote}

We primarily focus on a rank-based metric to choose the best matching for residents. We also study the variants of the problems above under a cardinality-based metric, which has been widely studied in the literature \cite{roth1984evolution,GaleSotomayor1985,roth1986allocation,manlove2013algorithmics}.

\subsection{Related Work}\label{sec:related_work}

In their seminal paper, Gale and Shapley \cite{gale1962college} introduced the stable matching problem and provided a polynomial time algorithm known as the deferred acceptance (DA) algorithm. The DA algorithm computes an assignment such that there is no pair of agents that would simultaneously prefer to be paired to each other rather than being in their current assignment; this is known as a \emph{stable matching}. In practice, the DA mechanism has been extensively used to improve admission processes, \eg, see~\cite{abdulkadirouglu2005new,biro2008student}. For further details on stable matching mechanisms, see~\cite{RothSotomayor1990,manlove2013algorithmics}.
In general, the main focus of the literature has been on finding the maximum cardinality stable matching, which can be efficiently obtained when there are incomplete preference lists\footnote{Not all the agents are ranked. In the case of incomplete preference lists, the Rural Hospital Theorem holds~\cite{roth1984evolution,GaleSotomayor1985,roth1986allocation,manlove2013algorithmics}, which states that all the stable matchings have the same cardinality.} without ties or complete preference lists that include ties.\footnote{Some agents in the preference list are ranked equally. In the case of preference lists with ties, all the weakly stable matchings are complete (under the assumption that the cardinalities on the two sides of the bipartition are equal). Weakly stability means there is no pair of agents that strictly prefer to be matched to each other rather than being in their current assignment.} Once we assume both, incomplete lists and ties, the problem of finding the maximum cardinality stable matching becomes NP-hard even under very restrictive conditions \cite{manlove2002hard}. In terms of approximation ratios, the best known factor is $\frac{3}{2}$ \cite{kiraly2013linear} and the best lower bound is $\frac{33}{29}$ \cite{yanagisawa2007approximation}.


The design of a stable matching mechanism, when the number of participants of one side is increased, has already been investigated in the past. If hospitals have capacity one, this is known as the \emph{entry comparative static} in the stable marriage problem. In this setting, the two sides are traditionally called women and men. In~\cite{KelsoCrawford1982,GaleSotomayor1985,RothSotomayor1990}, the authors proved that when a new woman is added to the instance,  all men are matched \textit{weakly better}. 
Recently, Kominers \cite{Kominers2019} extended this result to the many-to-one stable matching problem. On a similar path, Balinski and Sonmez \cite{BalinskiSonmez1999} proved that the DA method is invariant with respect to residents who improve their score in the ranking lists, \ie, instead of introducing a new agent, the ranking of an existing agent is improved. 
A substantial part of the literature has focused on strategy-proof matching mechanisms, \ie, on matching mechanisms that incentivize participants to reveal their true preferences. Sonmez \cite{Sonmez1997} proved that hospitals can manipulate the stable matching in their favor by falsely reporting a reduced capacity. Moreover, Romm \cite{Romm2014} proved that the stable matching mechanism can still be manipulated even if the reported capacities are enforced during the admission process. Digressing from the entry comparative static approach and strategy-proof mechanisms, Bobbio et al.~\cite{bobbio2021capacity} considered the allocation of extra capacities to hospitals as a decision variable rather than a parameter. Alternatively to the integer programming approach proposed by the authors, a solution methodology for optimizing the outcome for the residents is devised in~\cite{abe2022anytime}. Our work focuses on providing the computational complexity landscape of the problem tackled in~\cite{bobbio2021capacity,abe2022anytime}, its counterpart where existent hospital spots are removed, and other variants.

In \cite{chade2006simultaneous,ali2021college}, the problem of capacity expansion was addressed in the framework of residents' interviews for the hospital admission and was solved through an optimal portfolio choice. In \cite{manjunath2021interview}, the authors studied how the expansion of interviews (only on the residents' side) impacts the final matching. 
Kamada and Kojima~\cite{kamada2015efficient} studied matching mechanisms that impose regional quotas for the Japan Residency Matching Program. Our work differs from this, since we look to optimize the quotas rather than imposing them. A problem related to ours was addressed in \cite{yahiro2020game}. As part of their problem's input, they considered a profile of ``resources'' that can be allocated to ``projects'' (which would be the hospitals). The authors concentrated on designing strategy-proof and efficient mechanisms. In the capacity variation problems considered in our work, the resources are decision variables rather than part of the input. Even if we translate the capacity expansion problem into the setting in~\cite{yahiro2020game}, the input size would be exponential.
In \cite{nguyen2018near}, the authors studied the presence of couples in matching markets. The authors proved that by adding at most 9 extra capacities in a market with couples, the existence of a stable matching is guaranteed. 

The problems studied in our work also relate to the literature on resource augmentation \cite{phillips1997optimal,kalyanasundaram2000speed,roughgarden2020resource}, where the goal is to design algorithms whose performance is compared to the benchmark that takes decisions with complete information but with a deficit in resources. For more details, we refer the interested reader to \cite{roughgarden2020resource} and the references therein.
Finally, in the context of ride sharing, it has been shown that the expansion of the capacities of the drivers by $\epsilon$ leads to a substantial reduction in the cost of the matching \cite{akbarpour2021value}, which, however, is not required to be stable.

\subsection{Contributions and Organization} 

This paper is organized as follows. 
In Section~\ref{sec:preliminaries}, we introduce the formal notation, the problem of expanding capacities (Problem~\ref{problem_def}), and the problem of reducing capacities (Problem~\ref{problem_reduction}). 

In Section~\ref{sec:expansion}, our main focus is to establish the complexity of Problem~\ref{problem_def}. To achieve this result, we first prove, in Corollary~\ref{corollary-Manlove}, that determining the resident-optimal stable matching in the presence of ties is NP-hard and is not approximable within $\bar{n}^{1-\varepsilon}$, for any $\varepsilon>0$, where $\bar{n}$ is the number of residents. This result puts a boundary on the computability of the resident-optimal stable matching, which is well known to be polynomially solvable when there are no ties.
The remainder of Section~\ref{sec:expansion} is devoted to study the complexity of the capacity expansion problem. All our results are proven in the special case in which the capacity of every hospital is at most 1.
 Indeed, even under very restrictive assumptions, finding the allocation of extra capacities to the hospitals that minimizes the average hospital rank of the residents is NP-hard, and for any $\varepsilon>0$, it cannot be approximated within a factor of $(\bar{n}/2)^{(1-\varepsilon)/2}$ 
 unless P=NP (Theorem~\ref{HR_ext_NP_complete}). This result may be counter-intuitive because, in the vanilla version of HR, we can compute in polynomial-time the resident-optimal stable matching, which is equivalent to the one that minimizes the average hospital rank. Our complexity proof is based on a new structure that we call \emph{village}. Each village is assigned some extra capacities, and the preferences of the hospitals and residents in a village ensure that the extra capacities can be optimally allocated only in a specific way. 

In Section~\ref{sec:reduction}, we study the capacity reduction problem. We prove that this problem is NP-hard, and for any $\varepsilon>0$, it cannot be approximated within a factor of $(\bar{n}/2)^{(1-\varepsilon)/2}$ 
unless P=NP (Theorem~\ref{theorem_reduction_standard}). The proof follows a similar reasoning as in Theorem~\ref{HR_ext_NP_complete}. We exploit again the structure of the village, in which now every relevant hospital has capacity 1. 



In Section~\ref{sec:extensions}, we study several variants of Problems \ref{problem_def} and \ref{problem_reduction}. Specifically, we partition the set of hospitals and allocate (remove) a certain amount of capacities to (from) each set of the partition. Theorem~\ref{theorem_stud_set_expansion} shows that, even when we partition the set of hospitals and we allocate to or remove from each set at most one spot, finding the optimal allocation is an NP-hard problem. Moreover, we prove that the optimization version of the problem is not approximable within a factor of $\bar{n}^{1-\varepsilon}$, for any $\varepsilon>0$ 
(Theorem~\ref{theorem_stud_set_expansion}). The equivalent results for the reduction problem are shown in Theorem \ref{capacity_red_sub_complexity}. Finally, we provide similar results to the variant of the problems that consider as an objective function the cardinality of the matching, Theorems \ref{theorem_max_expansion} and \ref{card_cap_sub_red_complexity}, respectively.

Finally, some conclusions are drawn in Section \ref{sec:conclusions}.
A summary of our results and relevant results from the literature can be found in Table~\ref{tab:our_results}.

\begin{table}[ht]
\begin{center}
\resizebox{\columnwidth}{!}{
\begin{tabular}{|l|cc|}\hline
& \multicolumn{2}{c|}{\textsc{Decision version of the problem}}\\
\textsc{Framework} &   Maximum cardinality & Average hospital rank    \\ \hline 
HR/HRI & Polynomial \cite{roth1984evolution,GaleSotomayor1985,roth1986allocation,manlove2013algorithmics} & Polynomial \cite{gale1962college,mcvitie1971stable,dubins1981machiavelli,roth1982economics} \\
HRT & Polynomial \cite{manlove2013algorithmics} & Inapprox. (Section~\ref{sec:expansion}) \\
HRTI & NP-complete  \cite{manlove2002hard} & Inapprox. (Section~\ref{sec:expansion}) \\ \hline
HR capacity variation & Trivial  & Inapprox. (Sections~\ref{sec:expansion} and \ref{sec:reduction}) \\
\hline
HR cap. variation subsets & Trivial & Inapprox. (Section~\ref{sec:extensions})\\
HRI cap. variation subsets & NP-complete (Section~\ref{sec:extensions}) & Inapprox. (Section~\ref{sec:extensions})\\\hline
\end{tabular}
}
\vskip 0.10in
\caption{\normalsize{Compilation of our contributions and relevant computational complexity results from the literature. HR corresponds to the many-to-one stable matching problem, the suffixes I and T stand for incomplete preference lists and for preference lists with ties, respectively.
}
}
\label{tab:our_results}
\end{center}
\end{table}

\section{Preliminaries and Problem Definition}\label{sec:preliminaries}
The many-to-one stable matching problem consists of a set of residents $\S=\{i_1,\ldots,i_{|\S|}\}$, a set of hospitals $\C=\{j_1,\ldots,j_{|\C|}\}$ and a set of edges $\E$ between $\S$ and $\C$. A resident and a hospital are linked by an edge in $\E$ if they deem each other acceptable. In this work, we assume (if not otherwise stated) that every resident-hospital pair is acceptable, \ie, $\E=\S\times\C$. Each hospital $j\in\C$ has an non-negative integer capacity $c_j\in\ZZ_+$ 
that represents the maximum number of residents that hospital $j$ can admit.
In this setting, a \emph{matching} $M$ is a subset of $\E$ in which each hospital $j$ appears in at most $c_j$ pairs and each resident appears in at most 1 pair. We denote by $M(i)$ and $M(j)$ the hospital assigned to resident $i$ and the subset of residents assigned to hospital $j$, respectively.

An instance $\Gamma$ of the HR problem corresponds to a tuple $\Gamma=\langle \S,\C,\succ, \mathbf{c} \rangle$, where $\mathbf{c}\in\ZZ_+^{ \C}$ is the vector of capacities and $\succ$ corresponds to the profile of preferences that residents have over hospitals and vice-versa. 
Specifically, we assume that the preference list of each resident is a linear order. We use the notation  $j\succ_i j'$ to describe when resident $i$ prefers hospital $j$ over hospital $j'$. 
We assume that every agent is {\it individually rational}, i.e., every agent prefers the proposed assignment than to be unmatched. 
Concerning the preference list of every hospital, we assume it is a \textit{responsive} linear order over the power-set of the residents \cite{roth1985college}.\footnote{For any two subsets of residents $\S',\S''$, we denote that hospital $h$ prefers $\S'$ over $\S''$ as $\S'\succ_{h} \S''$. A preference relation of a hospital is responsive if for every $\S'\subseteq \S$ with $|\S'|\leq c_h$, $s'\in \S'$ and $s'' \notin \S'$, we have that  (i) $\S' \succ_h \S'\cup \{s''\} \setminus \{s'\}$ if and only if  $\{ s'\} \succ_h \{s''\}$, and (ii) $\S' \succ_h \S'\setminus \{s'\}$ if and only if $\{ s'\} \succ_h \emptyset$. Therefore, a responsive preference list can be obtained from the linear order over singletons. Since responsive preferences are substitutable and satisfy the law of aggregated demand, our results hold also under these more relaxed assumptions.}
A responsive linear order facilitates the description of the preference list, since we only have to focus on the linear order over single residents.
We write $i\succ_{j} i'$ to denote when hospital $j$ prefers resident $i$ over $i'$. Whenever the context is clear, we drop the subscript in $\succ$. We emphasize that in the HR problem, unless otherwise stated, the preference lists are complete and strict (there are no \emph{ties}). Under these assumptions, the length of the preference list of each agent, hospital or resident, is exactly the size of the other side of the bipartition. Therefore, preference lists can be interpreted in terms of rankings. Formally, for each resident $i\in\S$ and hospital $j\in \C$, we denote by $\text{rank}_i(j)\in\{1,\ldots,|\C|\}$ the rank of hospital $j$ in the list of resident $i$.
This means, for example, that the most preferred hospital has the lowest ranking. 
Analogously, we define $\text{rank}_j(i)\in\{1,\ldots,|\S|\}$ for all $j\in\C, \ i\in\S$.

Given a matching $M$, we say that a pair $(i,j)\in \E$ is a \emph{blocking pair} if the following two conditions are satisfied: (1) resident $i$ is unassigned or prefers hospital $j$ over $M(i)$, and (2) $|M(j)|<c_j$ or hospital $j$ prefers resident $i$ over at least one resident in $M(j)$. The matching $M$ is said to be \emph{stable} if it does not admit a blocking pair.  
Gale and Shapley \cite{gale1962college} showed that every instance of the HR problem admits a stable matching that can be found in polynomial time by the deferred acceptance method, also known simply as the Gale-Shapley algorithm. In particular, this algorithm can be designed to prioritize the residents in the following sense: Let $M$ and $M'$ be two different stable matchings, we say that a resident $i$ \emph{weakly prefers} $M$ over $M'$ if $M(i)\succ_i M'(i)$ or $M(i)=M'(i)$.
Then, the DA algorithm can be adapted to compute the unique stable matching that is weakly preferred by all residents over all the other possible stable matchings. Such unique stable matching is called \emph{resident-optimal}. 

\paragraph{Notation} To ease the exposition, we avoid using the symbol $\succ$ when presenting a preference list, instead we simply separate agents by \lq\lq,\rq\rq\, and use the convention that the leftmost agents are the most preferred. For instance, we will represent the preference list $w\succ w' \succ w''$ as $w,w',w''$. Throughout this work, for a given integer $k\geq 1$, we use the shorthand $[k]:=\{1,\ldots,k\}$. Finally, otherwise stated, we use indices $i$ for residents and $j$ for hospitals.


%

\subsection{Problem Definition}

In this work, we focus on the stable matchings that minimize the \emph{average hospital rank}. Recall that we denote by $\text{rank}_i(j)$ the position of hospital $j$ in the list of resident $i$. 
The average hospital rank of a matching $M$ is defined as
\begin{equation}\label{eq:matching_cost}
\avg(M):=\sum_{(i,j)\in M} \text{rank}_i(j),
\end{equation}
where, to ease the exposition, we do not divide by the total number of hospitals. 
We consider Expression \eqref{eq:matching_cost} as the objective function, since a basic result states that a stable matching $M$ is resident-optimal if, and only if, it is a stable matching of minimum average hospital rank \cite{bobbio2021capacity}. 

In our first problem, proposed in~\cite{bobbio2021capacity}, we aim to improve the allocation of residents by increasing the capacity of the hospitals. For a non-negative vector $\mathbf{t}\in\ZZ_+^{\C}$, we denote by $\Gamma_{\mathbf{t}} = \langle \S,\C, \succ,\mathbf{c}+\mathbf{t} \rangle$ an instance of the {\CA} problem in which the capacity of each hospital $j\in\C$ is $c_j+t_j$. Observe that $\Gamma_\mathbf{0}$ corresponds to the original instance $\Gamma$ with no capacity expansion.
Formally, we define the capacity expansion problem as follows. 
\begin{problem}[\CAexp]
\label{problem_def}
\boxxx{
{\sc instance}: A \emph{\CA} instance $\Gamma=\langle \S,\C,\succ, \mathbf{c} \rangle$, a non-negative integer expansion budget $B\in\ZZ_+$,  and a target value $K \in \mathbb{Z}_+$.\\
{\sc question}: Is there a non-negative vector $\mathbf{t}\in\ZZ_+^{\C}$ and a matching $M_{\mathbf{t}}$ such that 
\[
\text{\emph{AvgRank}}(M_{\mathbf{t}}) \leq K,
\]
where $\mathbf{t}$ satisfies $\sum_{j\in \C}t_j\leq B$  and $M_{\mathbf{t}}$ is a stable matching in instance $\Gamma_{\mathbf{t}}$?
}
\end{problem}


Given parameters $B$ and $K$, Problem \ref{problem_def} aims to determine the existence of an allocation of $B$ extra spots through vector $\mathbf{t}$ such that there is a stable matching with an average hospital rank of at most $K$.

Throughout the paper, we assume that the total capacity of the hospitals is at least the total number of residents, \ie, $\sum_{j\in \C}c_j\geq |\S|$. If this assumption does not hold, we must define the cost of an un-assigned resident. A natural option is to add an artificial hospital with large capacity such that is ranked last by every resident. Therefore, un-assigned residents will be allocated in the artificial hospital whose rank is $|\C|+1$. Note that as a consequence of our assumption, $\sum_{j\in \C}c_j\geq |\S|$, there may be hospitals that do not fill their quota. 


In our second problem, we aim to find the reduction of the hospitals' capacities such that the final average hospital rank is the lowest possible, i.e., that has the least impact on the allocation of residents. As before, for a non-negative vector $\mathbf{t}\in\ZZ_+^{\C}$, we denote by $\Gamma_{-\mathbf{t}} = \langle \S,\C, \succ,\mathbf{c}-\mathbf{t} \rangle$ an instance of the {\CA} problem in which the capacity of each hospital $j\in\C$ is $c_j-t_j$.
Formally, we define our second problem as follows.
\begin{problem}[\CAred]
\label{problem_reduction}
\boxxx{
{\sc instance}:  A \emph{{\CA}} instance $\Gamma=\langle \S,\C,\succ, \mathbf{c} \rangle$, a non-negative integer reduction budget $B\in\ZZ_+$ such that $- B+\sum_{j\in \C}c_j \geq |\S|$ and a target value $K \in \mathbb{Z}_+$.\\
{\sc question}: Is there a non-negative vector $\mathbf{t}\in\ZZ_+^{\C}$ and a matching $M_\mathbf{t}$ such that 
\[
\text{\emph{AvgRank}}(M_{\mathbf{t}}) \leq K,
\]
where $\mathbf{t}$ satisfies $\sum_{j\in \C}t_j\geq B$ and $(c_j - t_j) \geq 0$ for every $ j \in \C$,  and $M_\mathbf{t}$ is a stable matching in instance $\Gamma_{-\mathbf{t}} $?
}
\end{problem}
Note that in Problem~\ref{problem_reduction}, we have the additional constraint that the capacity of every hospital should remain non-negative after removing spots, \ie, $c_j-t_j\geq 0$ for all $j\in\C$. We further assume that the sum of the reduced hospitals' capacities is greater or equal than the number of residents, \ie, $-B + \sum_{j\in \C}c_j\geq |\S|$. 
As in Problem \ref{problem_def}, if this assumption does not hold, we can transform the instance by adding an artificial hospital with a large capacity (which is ranked last in every resident's list) and by allowing the reduction of capacities to the original hospitals only.

\section{The Capacity Expansion Problem}\label{sec:expansion}


Our main result in this section establishes the computational complexity and inapproximability of Problem \ref{problem_def}. Denote by {\CAexpOpt} the optimization version of Problem \ref{problem_def}, \ie, the problem of finding the allocation of extra spots and the stable matching in the expanded instance that minimizes $\avg$. Formally, our main result is the following.
\begin{theorem}\label{HR_ext_NP_complete}
{\CAexp} is NP-complete. Moreover, for any $\varepsilon>0$, {\CAexpOpt} cannot be approximated within a factor of $(\bar{n}/2)^{(1-\varepsilon)/2}$, where $\bar{n}$ is the number of residents, unless \emph{P=NP}. 
\end{theorem}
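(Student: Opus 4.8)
The plan is to prove membership in NP directly, and then to obtain NP-hardness together with the inapproximability bound through a single gap-preserving reduction from the problem of computing a resident-optimal stable matching under ties, whose inapproximability within $\bar m^{1-\varepsilon}$ (in the number $\bar m$ of its residents) is exactly the content of Corollary~\ref{corollary-Manlove}. The strict, complete preferences of {\CAexp} rule out a direct appeal to that corollary, so the whole difficulty is to turn the combinatorial tie-breaking of the ties instance into an equivalent choice of how to spend the expansion budget.

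For membership, I would use the pair $(\mathbf{t},M_{\mathbf{t}})$ as a certificate. Since $\sum_{j\in\C}t_j\le B$, each entry $t_j$ is bounded by $B$ and the certificate has polynomial size; one then verifies in polynomial time that $\mathbf{t}\ge 0$ respects the budget, that $M_{\mathbf{t}}$ is a feasible matching in $\Gamma_{\mathbf{t}}$ (each hospital $j$ used at most $c_j+t_j$ times, each resident at most once), that $M_{\mathbf{t}}$ has no blocking pair by scanning all $|\S|\,|\C|$ resident–hospital pairs, and finally that $\avg(M_{\mathbf{t}})\le K$. Hence {\CAexp}$\in$NP.

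The core is the reduction. Starting from a ties instance $I$ on $\bar m$ residents, I would replace each indifference class by a \emph{village}: a block of strict-preference hospitals and auxiliary residents, together with a set of ``gateway'' hospitals of initial capacity $0$ that are disjoint across villages (so that the $0/1$-capacity regime of the section is respected). The strict preferences inside a village would be engineered so that (i) with no added capacity the resident-optimal stable matching is forced to route the village's residents into high-rank, expensive assignments, and (ii) spending exactly one budget unit to expand one designated gateway hospital ``unlocks'' one specific way of breaking the corresponding tie, sending the relevant resident to its low-rank target. Setting $B$ equal to the number of tie-breaking decisions forces any feasible $\mathbf{t}$ to commit to a complete tie-breaking of $I$, while the village preferences prevent any unintended blocking pair. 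I would then argue both directions: a stable matching of $I$ with small sum-of-ranks yields a budget allocation and a stable matching in $\Gamma_{\mathbf{t}}$ of correspondingly small $\avg$, and conversely any $(\mathbf{t},M_{\mathbf{t}})$ with $\avg(M_{\mathbf{t}})\le K$ projects onto a stable matching of $I$ whose sum-of-ranks is controlled by $K$; the delicate point is that stability of $M_{\mathbf{t}}$ in $\Gamma_{\mathbf{t}}$ must translate into stability in $I$ under the chosen tie-break.

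The inapproximability factor would then follow by bookkeeping. Each village contributes $\Theta(\bar m)$ auxiliary residents, so the construction blows up the number of residents only quadratically, $\bar n\le 2\bar m^2$, while the background ranks are arranged so that the ratio between the optimal $\avg$ on yes- and no-instances equals the ratio $\bar m^{1-\varepsilon}$ supplied by Corollary~\ref{corollary-Manlove}. Since $\bar m\ge\sqrt{\bar n/2}$, this preserved ratio is at least $(\bar n/2)^{(1-\varepsilon)/2}$, so a polynomial-time approximation of {\CAexpOpt} within that factor would approximate the resident-optimal matching under ties within $\bar m^{1-\varepsilon}$, forcing P=NP. I expect the main obstacle to be the village design itself: making strict, complete preferences plus an integer budget simulate the tie-breaking \emph{exactly}, ruling out every unintended blocking pair and every wasteful way of spending $B$, while simultaneously calibrating the ranks so that the fixed village costs are negligible and $\avg$ faithfully reproduces the sum-of-ranks of $I$ with the scaling needed for the stated factor.
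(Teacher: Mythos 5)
Your overall plan coincides with the paper's: membership in NP via the certificate $(\mathbf{t},M_{\mathbf{t}})$, a gap-preserving reduction from the tied instance of Corollary~\ref{corollary-Manlove} in which each tie is simulated by a \emph{village} of zero-capacity hospitals whose expansion encodes the tie-break, and a quadratic blow-up $\bar{n}\leq 2n^2$ that converts the $n^{1-\varepsilon}$ gap into $(\bar{n}/2)^{(1-\varepsilon)/2}$. All of that matches Section~\ref{sec:exp_reduction} and the proof in Section~\ref{sec:main_theorem_reduction}.

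However, the proposal leaves a genuine gap precisely where the theorem lives: the village is never constructed, and the one parameter you do commit to is wrong. You set $B$ equal to the number of tie-breaking decisions, i.e.\ one budget unit per village, with a single ``gateway'' hospital unlocking the chosen branch of the tie. But each village must contain $\Theta(n)$ auxiliary residents (you need them, as you note, to keep the background ranks calibrated and to mimic the full rank of $i_\ell$ over all of $\C$), and in any stable matching of the expanded instance those auxiliary residents must be absorbed \emph{inside} the village; otherwise they spill into the original hospitals $\C$ or into the overflow set, destroying both stability of the projected matching and the cost accounting. This is why the paper allocates $B=L\cdot n$ extra spots, one to $v^0_{\ell,r}$ and one to each of the $n-1$ hospitals $v^1_{\ell,h}$, $h\neq r$, and why the target becomes $\bar{K}=nK+2nL$ rather than $K$ plus a fixed offset. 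With $B=L$ the intended matching is simply not stable. The second missing piece is the analogue of Lemma~\ref{lemma_second_complexity}: you must rule out \emph{every} alternative way of spending the budget (moving capacity to $\C$, $\C^0$, $\Z$, $\X$, or across villages), which requires the auxiliary sets $\C^0$, $\Z$, $\X$ and a rank hierarchy making each such transfer cost at least $n^3$ while saving at most $O(n^2)$. You correctly identify this as ``the main obstacle,'' but identifying it is not the same as overcoming it; as written, the argument that stability in $\Gamma_{\mathbf{t}}$ projects to weak stability in $I$, and that no wasteful allocation beats the intended one, is asserted rather than proved.
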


To give some insights on the difficulty of Problem \ref{problem_def},
we first present an intuitive approach when $B=1$ and we show that it does not always provide an optimal solution.
In real life instances, certain hospitals may be ``more popular'' than others, namely, they are preferred by well-known voting methods such as Majority or Borda count \cite{zwicker2016introduction}.  
Thus, when $B=1$, a natural approach is to assign the additional spot to the hospital that is preferred by the majority or Borda count. 
However, as the following example shows, this is not necessarily optimal.

\paragraph{Counterexample for the Majority and Borda count}  Let $\S=\{i_1,i_2,i_3,i_4,i_5, i_6\}$ and $\C=\{j_1,j_2,j_3,j_4 \}$.  We assume that all hospitals have the same preference list: $i_1\succ i_2\succ\cdots\succ i_6$. Hospitals $j_1, j_2$ and $j_3$ have each capacity 1, and hospital $j_4$ has capacity 3. Resident $i_1$ ranks hospitals as $j_2\succ j_1\succ j_3\succ j_4$. Resident $i_2$ ranks hospitals as $j_2\succ j_3\succ j_1\succ j_4$. Resident $i_3$ ranks hospitals as $j_3\succ j_2\succ j_4\succ j_1$. Residents $i_4$, $i_5$ and $i_6$ rank hospitals as $j_1\succ j_4\succ j_3\succ j_2$. The resident-optimal stable matching is $ M = \{ (i_1,j_2), (i_2,j_3), (i_3,j_4), (i_4,j_1), (i_5,j_4),(i_6,j_4)\}$ with $\avg(M)=11$. Now, consider Problem \ref{problem_def} with $B=1$ and $K=9$. For this instance, an intuitive solution is allocating the extra spot to $j_1$, which is the most preferred hospital according to both Majority vote and Borda vote; the allocation of one extra capacity to $j_1$ is sub-optimal. Indeed, if we expand the capacity $c_{j_1}=1$ to $c_{j_1}=2$, then resident $i_5$ would be assigned to hospital $j_1$, which leaves an extra spot in hospital $j_4$. This solution reduces the average hospital rank by 1 unit and the resulting matching does not meet the target $K=9$. Instead, if we expand the capacity of $j_2$ to 2, then resident $i_2$ is admitted by hospital $j_2$, leaving an empty spot in hospital $j_3$ that is filled by resident $i_3$; the resulting matching has an average hospital rank of 9.\\

As the previous example shows, the allocation of one extra spot is not trivial when we try to solve it by just looking at the residents' preferences. 
However, we can still solve this problem in polynomial time by doing an exhaustive search in combination with the DA algorithm. To achieve this, we compute the resident-optimal stable matching using the DA mechanism in the instance $\Gamma_{\mathbf{t}}$ with $\mathbf{t}=\one_{j}$ for each $j\in \C$, where $\one_j\in\{0,1\}^{\C}$ is the indicator vector whose $j$-th component is 1 and the rest is 0. Once we obtain the cost for each $j\in\C$, we output the resident-optimal stable matching of minimum average hospital rank. Finally, we compare with our target $K$ to decide if such an allocation exists or not. Since the DA algorithm's runtime complexity is $O(|\S|\cdot |\C|)$ \cite{gale1962college}, then this exhaustive search runs in $O(|\S|\cdot |\C|^2)$. Whether this can be improved remains an open question.

To prove Theorem \ref{HR_ext_NP_complete}, we first study a variant of the \emph{egalitarian stable marriage} problem \cite{manlove2013algorithmics}. 
Formally, the stable marriage (SM) problem corresponds to the HR problem where $c_j= 1$ for all $j\in\C$. 
We use SMT to indicate the version of SM when ties are present in the preference lists. A tie appears when an agent allocates in the same position of the list two different participants of the opposite side. For example, if the preference list for resident $i$ is $j_4, (j_1, j_3), j_2$,\footnote{Throughout this paper, round brackets denote a tie.} then the rankings are $\text{rank}_i(j_4) = 1$, $\text{rank}_i(j) = 2$ for $j \in \{j_1,j_3\}$ and $\text{rank}_i(j_2) = 4$. For the SMT problem, stable matchings can be defined in several ways, but in this paper we consider \emph{weak stability} \cite{manlove2013algorithmics}. Formally, a matching $M$ is weakly stable if there is no pair such that both agents \emph{strictly} prefer each other over their allocation in $M$.
An egalitarian stable matching is a stable matching that minimizes the total sum of the rankings, \ie, $\sum_{(i,j)\in M} [\text{rank}_i(j)+\text{rank}_j(i)]$. Manlove et al.~\cite{manlove2002hard} proved that the problem of finding the egalitarian stable matching for SMT is not approximable within $\bar{n}^{1-\epsilon}$, for any $\epsilon>0$, unless $\text{P}=\text{NP}$, where $\bar{n}$ is the size of one side of the bipartition. For more details, we refer to Theorem 7 in \cite{manlove2002hard}.

Let us define the following variant of the egalitarian SMT problem.
\begin{problem}[\MinW]
\label{problem_minW}
\boxxx{
{\sc Instance}: An \emph{SMT} instance $\Gamma = \langle \S,\C,\succ, \mathbf{c} \rangle$ with $ c_j = 1$ for all $j \in \C$ and a target value $K \in \mathbb{\ZZ_+}$.  \\
{\sc Question}: Is there a weakly stable matching $M$ such that $\text{\emph{AvgRank}}(M) \leq K$?}
\end{problem}

We use {\MinWOpt} to denote the optimization version of {\MinW}, \ie, the problem of finding a weakly stable matching that minimizes $\avg$. Using the ideas in \cite{manlove2002hard}, we can obtain the following result for \MinW. 
\begin{corollary}\label{corollary-Manlove}
{\MinW} is NP-complete. Moreover, for any $\varepsilon>0$, {\MinWOpt} is not approximable within a factor of $\bar{n}^{1-\varepsilon}$, unless $\mathrm{P}=\mathrm{NP}$, where $\bar{n}=|\C|$. This result holds even if ties are only in one side, there is at most one tie per list, and each tie is of length two. 
\end{corollary}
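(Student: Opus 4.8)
The plan is to prove both statements by re-deploying the construction behind Theorem~7 of \cite{manlove2002hard}, which already establishes $\bar n^{1-\varepsilon}$-inapproximability for the \emph{two-sided} egalitarian objective under exactly the structural restrictions we claim: complete lists, ties on a single side, at most one tie per list, and each tie of length two. Membership of \MinW{} in NP is immediate, since given a candidate matching one checks weak stability by scanning all pairs and computes $\avg$ in polynomial time, so the entire burden is the reduction together with a correct re-accounting of the objective.

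The conceptual gap between the egalitarian cost $\sum_{(i,j)\in M}[\text{rank}_i(j)+\text{rank}_j(i)]$ and our one-sided cost $\avg(M)=\sum_{(i,j)\in M}\text{rank}_i(j)$ is the heart of the matter. I would place all ties on the hospital side (relabelling the two sides of \cite{manlove2002hard} if necessary), so that residents have strict complete lists and $\avg$ is \emph{exactly} the resident-side contribution, while the hospitals---being the indifferent side---are the source of the multiplicity of weakly stable matchings that the optimizer exploits. The key structural fact I would isolate is that, in the instance produced by the reduction, the hospital-side total $\sum_{(i,j)\in M}\text{rank}_j(i)$ takes the \emph{same} value $C$ for every weakly stable matching $M$. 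This holds precisely when the set of residents that any given hospital can be assigned across weakly stable matchings is contained in a single length-two tie of that hospital's list (so those residents share one rank), together with the forced/singleton pairs that contribute fixed ranks.

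Granting this invariance, every weakly stable matching satisfies $\text{egalitarian}(M)=\avg(M)+C$, so on the same instance minimizing $\avg$ is equivalent to minimizing the egalitarian cost; the YES/NO separation of \cite{manlove2002hard} then transfers verbatim to \MinW{}, with the target $K$ shifted by the additive constant $C$, and the inapproximability factor of \MinWOpt{} remains $\bar n^{1-\varepsilon}$ with $\bar n=|\C|$ once one checks that the reduction keeps the two sides of comparable (indeed equal) size. NP-completeness of the decision version follows from the very same reduction.

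The main obstacle I anticipate is establishing the rank-invariance of the hospital side. I would have to verify, gadget by gadget, that the only freedom among weakly stable matchings in the construction is the breaking of the length-two hospital ties---that is, that every hospital's possible partners lie at a common rank---or, should the original gadget not already have this property, insert a light \emph{pinning} gadget that forces each hospital's realized rank to be constant without disturbing weak stability, the one-tie-per-list pattern, or the asymptotics $\bar n=|\C|$. A secondary, routine point is to confirm that shifting the target by $C$ and the polynomial size blow-up leave the exponent $1-\varepsilon$ intact, which is immediate once both $C$ and the instance size are bounded polynomially in the size of the source instance.
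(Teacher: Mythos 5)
Your route is genuinely different from the paper's, and it has a gap at exactly the point you flag as the ``main obstacle.'' The paper does not reduce from the egalitarian inapproximability result at all: it reduces directly from the NP-complete decision problem \Maxcard{} (Theorem 2 of \cite{manlove2002hard}) and rebuilds the gap itself, by adding $n^a$ forced pairs $(m_h^0,w_h^0)$ and $n^{a-1}$ padded copies of the source instance, arranged so that in a YES instance every agent on the objective side is matched within the first $n$ positions of her list (total cost at most $n^{a+2}/2$), while in a NO instance each copy forces some agent on the objective side past the block $\C^0$ of size $n^a$, driving the total above $n^{2a-1}$. The one-sided objective is controlled directly; no comparison with the two-sided egalitarian cost is ever needed.

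The step of yours that fails is the claimed invariance of $\sum_{(i,j)\in M}\mathrm{rank}_j(i)$ over all weakly stable matchings of the Theorem~7 instance. In that construction the multiplicity of weakly stable matchings does \emph{not} come only from breaking length-two ties at a fixed rank: it comes from the underlying \Maxcard{} instance admitting weakly stable matchings of different structure. In the NO case an agent $m_h^s$ on the tied side can be matched to a partner drawn from the padding block (rank roughly between $n$ and $n+n^a$ in his list) rather than from his proper list (rank at most $n$), so the tied-side rank sum differs between the YES-type and NO-type matchings by $\Theta(n^a)$ --- the same order as the gap you are trying to transfer. Hence the identity $\mathrm{egalitarian}(M)=\avg(M)+C$ with a matching-independent $C$ is false for this instance, and the additive-shift argument collapses. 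The ``pinning gadget'' you invoke as a fallback is the entire missing content: it would have to force every tied-side agent's realized rank to be constant across all weakly stable matchings while still permitting the resident-side cost to vary by a factor of $\bar n^{1-\varepsilon}$, and no such gadget is exhibited. (A further, minor point: the paper needs the ties to end up on the \emph{resident} side, i.e.\ the side whose ranks are counted, because the village construction in Section~3 consumes the corollary in that form; your relabelling puts them on the opposite side.) The repair is simply to abandon the black-box use of the egalitarian result and redo the gap construction for the one-sided objective, which is what the paper's proof does.
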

For completeness, we provide the proof of this corollary in the Appendix.
Let us now provide a sketch of the steps to prove Theorem \ref{HR_ext_NP_complete}.
Given an instance $\Gamma$ of {\MinW}, we construct the following instance $\hat\Gamma$ of  {\CAexp}: For every hospital in $\Gamma$ that has ties in its preference list, we create a \textit{village} of residents and hospitals with different capacities and strict preferences. 
In Lemma \ref{lemma_first_complexity}, we prove that the construction can be done in polynomial time and it selects a special stable matching in the new instance. Let $M$ be the stable matching of minimum average hospital rank in {\MinW};  in Lemma \ref{lemma_second_complexity}, we prove that the stable matching $\hat{M}_{\mathbf{t}}$ in $\hat\Gamma$ is in fact the stable matching of minimum average hospital rank in {\CAexp}. 
%
%


%

\subsection{Design of the Instance}\label{sec:exp_reduction}

First, we observe that {\MinW} is NP-complete even if ties occur only among the preference lists of residents and in each preference list there is at most one tie of length 2, and it is positioned at the head of the list. For more details, we refer to Remark~\ref{remark-wlog} in the Appendix. Throughout this section, we assume that an instance of SMT satisfies these properties. Now, we introduce a polynomial transformation from such an instance of {\MinW} to an instance of {\CAexp}.

Let $\Gamma = \langle \S,\C,\succ, \mathbf{c} \rangle$ be an instance of {\MinW} such that $c_j = 1$ for all $j\in \C$ and $|\S|=|\C|=n$. Let $L\leq n$ be the number of residents with ties in their preference list.
The set of residents is partitioned in two sets $\S=\S'\cup \S''$ where $\S'$ is the set of residents with a tie of length two at the head of the preference list and $\S''$ is the set of residents with a strict preference list. Henceforth, we fix an ordering of the residents in $\S$ and denote $\S'=\{i_1,\ldots,i_L\}$ and $\S''=\{i_{L+1},\ldots, i_n\}$. Since preference lists are complete, observe that in any weakly stable matching every resident is matched.\footnote{This follows from the hypothesis that preference lists are complete and that the total capacity of the hospitals can accommodate all the residents. 
} 

In the following, we create an instance $\hat\Gamma = \langle \hat\S,\hat\C,\hat\succ, \hat{\mathbf{c}} \rangle$ of {\CAexp} with a specific target value and budget.  

\paragraph{Hospitals and residents.} First, we add a copy of the hospitals in $\C=\{j_1,\ldots,j_n\}$ and residents in $\S''=\{i_{L+1},\ldots,i_{n}\}$ 
We also introduce a set of hospitals $\C^0=\{j^0_1,\ldots,j^0_{n^2}\}= \C^{0,1}\cup\ldots\cup\C^{0,n}$ of size $n^2$ (where each $\C^{0,h}$ has size $n$), a set $\Z=\{z_1,\ldots,z_{n^3}\}$ of hospitals of size $n^3$, and a set $\X=\{x_1,\ldots,x_{n\cdot L}\}$ of hospitals of size $n\cdot L$. Recall that we index the residents in $\S'$ as $i_1,\ldots,i_L$. For every resident $i_\ell\in \S'$, where $\ell\in[L]$, we introduce additional hospitals and residents, to form a structure that we call \textit{village}, namely
\begin{itemize}
    \item a set of residents $\W_\ell=\{ w_{\ell,h} \}_{h\in[n]}$;
    \item a resident $y_\ell$;
    \item two sets of hospitals $\V_{\ell,0}=\{v^{0}_{\ell,h} \}_{h\in [n]}$ and $\V_{\ell,1}=\{v^{1}_{\ell,h} \}_{h\in [n]}$. Let $\V_\ell=\V_{\ell,0} \cup \V_{\ell,1}$.
\end{itemize}
We denote as $\B_{i_\ell}$ the village associated with resident $i_\ell\in\S'$ and $\V:= \bigcup_{\ell=1}^L \V_\ell$. 
In summary, we have $\hat\S = \S''\cup\{\W_\ell,y_\ell\}_{\ell\in[L]}$ and $\hat\C = \C \cup \C^0\cup \X\cup \V\cup \Z$.

\paragraph{Capacity vector.} Now, let us construct the capacity vector: For each hospital $v\in\V\cup\C^0\cup\Z$, we consider $\hat{c}_v = 0$; for every other hospital $j\in \C\cup\X$, we take $\hat{c}_j=1$.

\paragraph{Preference lists.} We now proceed to construct the preference lists in $\hat\Gamma$.


Given a resident $i_\ell\in \R'$ with $\ell\in[L]$, let $(j_{\sigma_1}, j_{\sigma_2}), j_{\sigma_3}\ldots $ be her ranking of the hospitals in the original instance $\Gamma$ (the parenthesis symbolizes the tie at the head of the list). We provide the preference lists of the residents and hospitals in village $\B_{i_\ell}$ with $\ell\in [L]$, namely 
\begin{subequations}
\begin{alignat*}{2} 
 w_{\ell,1} &: v^1_{\ell,1},\V_{\ell,0}\setminus \{v_{\ell,1}^0\}, \C_{n-1}^0, j_{\sigma_1}   ,  \Z, \ldots , \X    \\
 w_{\ell,2} &: v^1_{\ell,2},\V_{\ell,0}\setminus \{v_{\ell,2}^0\}, \C_{n}^0, j_{\sigma_2},   \Z,\ldots , \X    \\
 w_{\ell,h} &:  v^1_{\ell,h},\V_{\ell,0}\setminus \{v_{\ell,h}^{0}\},  \C_{(n-1)\cdot h}^0, j_{\sigma_h},   \Z,\ldots , \X  \hspace{2em}  h\in\{3,\ldots, n\}\\ 
 y_\ell &:  v_{\ell, 2}^{0}, v_{\ell, 1}^{ 0}, v_{\ell, 3}^{ 0}, \ldots,  v_{\ell, n}^{ 0} ,\Z,  \ldots , \X   \\
 v^1_{\ell,h} &: w_{\ell,h},  \ldots \hspace{17.5em}   \ h\in[n] \\
 v^0_{\ell,h} &: \W_\ell\setminus \{ w_{\ell,h} \}  , y_\ell,  \ldots \hspace{13.2em}  h\in[n] 
\end{alignat*}
\end{subequations}
where $\C_h^0=\{j_{1}^0,\ldots,  j_{h}^0  \}$. The purpose of positioning of set $\C_h^0$ in the preference lists of the residents $w_{\ell,h}$ for $h\in[n]$, is to ensure that we can mimic  $\text{rank}_{i_\ell}(j)$, for every $j\in\C$, of the original instance.  The symbol \lq\lq \ldots\rq\rq\ means that the remaining agents on the other side of the bipartition are ranked strictly and arbitrarily.

Now, we present the preference list of the copy of every hospital $j\in \H$ and every resident $i\in \R''$ in the new instance. 
 We modify the original preference list of $j \in \C$ by substituting every resident $i_\ell\in \R'$ (for $\ell\in[L] $) with resident $w_{\ell, r}$, where $r=\text{rank}_{i_\ell}(j)$ is the rank of $j$ in the list of $i_\ell$. If $j$ is ranked first by $i_\ell$, then we substitute $i_\ell$ with resident $w_{\ell, 1}$ when $j$ is the first hospital listed in the tie, otherwise with $w_{\ell, 2}$. Then, hospital $j$ ranks arbitrarily a strict ordering of the remaining residents.

Let $i\in \S''$ and let $j_{\sigma_1},\ldots, j_{\sigma_n} $ be her strict and complete preference list in $\Gamma$.  The preference list of $i$ in our new instance  $\hat\Gamma$ is 
\[
\C^{0,1}_{n-1},j_{\sigma_1},\C^{0,2}_{n-1},j_{\sigma_2},\ldots,\C^{0,n}_{n-1}, j_{\sigma_n}, \Z\ldots \]  

The preference lists of hospitals in $\X$, $\C^0$ and $\Z$ are arbitrary. The sole purpose of the hospitals in $\X$ is to ensure that there are sufficient capacities for all the residents. The scope of the set $\C^0$ is to help mimic the original ranking of the copies of the hospitals. The set of hospitals $\Z$ is introduced to make costly certain re-allocation of extra spots. The set $\V$ is used to leverage the stability and ensure that different allocations of extra spots yield sub-optimal results.

\paragraph{Target value and budget.} Given target $K$ and $L$  residents  with a tie in their list in an instance of {\MinW}, we define target $\bar{K}=n\cdot K + 2n\cdot L$ and budget $B=L\cdot n$ for {\CAexp}, where $L$ is the number of residents with a tie in their list.\\

Finally, note that $\hat\S = \S''\cup\{\W_\ell,y_\ell\}_{\ell\in[L]}$ and $\hat\C = \C \cup \C^0\cup \X\cup \V\cup \Z$. Therefore, the instance $\hat\Gamma$ consists of $((n-L)+  L \cdot (n+1) ) + (n+ n^2 + n\cdot L + L\cdot 2n + n^3)= n^3 + n^2 + 4n\cdot L + 2n$ residents and hospitals, which is $O(n^3)$; therefore the construction can be done in polynomial time.





\begin{remark}
Note that if no extra spots are assigned in our new instance, the set of hospitals $\X$ ensures that the residents are always matched.\footnote{This could be done also by adding a copy of themselves at the end of their list, which is usually referred to as individual rationality. Matching with oneself means being unassigned.}
Matching the residents to the hospitals in $\X$ leads to a higher average hospital rank. 
In the following section, we will prove that it is optimal to assign $L\cdot n$ extra capacities to the hospitals in $\V$, whose initial capacity is zero. 
\end{remark}

\subsection{Useful Lemmata}
For this section, recall that we are considering budget $B=L\cdot n$, where $L=|\S'|$ is the number of residents with a tie in their preference list.

\begin{lemma}\label{lemma_first_complexity}
For every weakly stable matching $M$ in $\Gamma$ with $\text{\emph{AvgRank}}(M)=K^M$, there is an allocation $\mathbf{t}$ respecting the budget $B=L \cdot n$ and a stable matching $\hat{M}_\mathbf{t}$ in $\hat{\Gamma}_\mathbf{t} = \langle \hat\S,\hat\C, \hat\succ,\hat{\mathbf{c}}+\mathbf{t} \rangle$ with $\text{\emph{AvgRank}}(\hat{M}_\mathbf{t})=n\cdot K^M + 2n\cdot L$.
%
\end{lemma}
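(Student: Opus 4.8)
The plan is to turn the weakly stable matching $M$ into an allocation $\mathbf{t}$ and a matching $\hat M_\mathbf{t}$ village by village, and then read off the cost. First I would record, for each tie-resident $i_\ell\in\S'$, the index $h(\ell)$ with $M(i_\ell)=j_{\sigma_{h(\ell)}}$ (this match exists and is unique, since preferences are complete and every resident is matched in a weakly stable matching). I would then define $\mathbf{t}$ locally in each village $\B_{i_\ell}$: place one extra unit on each $v^1_{\ell,h'}$ with $h'\neq h(\ell)$ and one extra unit on $v^0_{\ell,h(\ell)}$, leaving the capacities of $\C^0,\Z,\X$ untouched. This consumes exactly $n$ units per village and $L\cdot n=B$ units in total, so the budget is met with equality.

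Next I would define $\hat M_\mathbf{t}$. On $\S''$ it copies $M$ verbatim, matching each $i\in\S''$ to the copy of $M(i)$. Inside village $\B_{i_\ell}$ it sends the avatar $w_{\ell,h(\ell)}$ to the copy of $j_{\sigma_{h(\ell)}}$, every remaining $w_{\ell,h'}$ to its first choice $v^1_{\ell,h'}$, and $y_\ell$ to $v^0_{\ell,h(\ell)}$. Because each hospital copy ranks $w_{\ell,r}$ exactly where $i_\ell$ sat in the original list (with $r=\text{rank}_{i_\ell}(\cdot)$), the restriction of $\hat M_\mathbf{t}$ to the copied hospitals is a faithful image of $M$.

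The crux is to verify that $\hat M_\mathbf{t}$ is stable in $\hat\Gamma_\mathbf{t}$, which I would do by ruling out blocking pairs in three groups. (i) Pairs among the copied hospitals pull back to blocking pairs of $M$: a profitable deviation of $w_{\ell,h(\ell)}$ toward a more preferred copy, or of an $\S''$-resident, would correspond to a pair that strictly improves for both sides in $\Gamma$, contradicting weak stability---here the head-tie must be handled so that a strict improvement really is forced. (ii) Within a village, every $w_{\ell,h'}$ with $h'\neq h(\ell)$ sits at its top choice and never blocks; $y_\ell$ only covets hospitals in $\V_{\ell,0}$, all of which are either filled by a better-ranked $w$ or carry zero capacity; and the single funded hospital $v^0_{\ell,h(\ell)}$ prefers the $w$'s, yet none of them strictly prefers it to its own assignment (the avatar $w_{\ell,h(\ell)}$ ranks its external hospital ahead of $v^0_{\ell,h(\ell)}$, and the others are at their first choice). (iii) Cross pairs with hospitals of other villages or with $\C^0,\Z,\X$ are impossible, since those hospitals have no spare capacity or are ranked below the resident's current partner. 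The zero-capacity padding sets $\C^0$ and $\Z$ and the deliberate placement of the single funded $v^0$ are exactly what close this case analysis.

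Finally I would compute $\avg(\hat M_\mathbf{t})$. Each $i\in\S''$ contributes $n\cdot\text{rank}_i(M(i))$ evaluated in $\Gamma$, because the $(n-1)$-hospital padding block $\C^{0,k}_{n-1}$ inserted before each $j_{\sigma_k}$ scales the original rank by a factor of exactly $n$. Each village contributes $n\cdot\text{rank}_{i_\ell}(M(i_\ell))+2n$: the $n-1$ internal $v^1$-edges cost one apiece, while the lengths of the $\C^0$ blocks in the $w$-lists together with the swapped order $v^0_{\ell,2},v^0_{\ell,1},v^0_{\ell,3},\dots$ at the head of $y_\ell$'s list are tuned so that the cost of the external avatar-edge plus the $y_\ell$-edge is independent of which tie element $i_\ell$ selected and equals $n\cdot\text{rank}_{i_\ell}(M(i_\ell))$ plus the fixed per-village overhead. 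Summing the $\S''$ contributions and the $L$ village contributions then yields $n\cdot K^M+2n\cdot L$, as required. I expect the stability check of group (i)---correctly transferring weak stability across the head-tie and the avatar substitution---to be the main obstacle, with the uniform cost bookkeeping across the two tie sub-cases a close second.
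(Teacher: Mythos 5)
Your proposal is correct and follows essentially the same route as the paper: the same per-village allocation (one unit to $v^0_{\ell,r}$ for the matched rank $r$ and one to each $v^1_{\ell,h}$, $h\neq r$), the same induced matching, the same three-way blocking-pair analysis, and the same cost decomposition into the $\times n$ scaling from the $\C^0$ padding plus a fixed $2n$ overhead per village. The only difference is that you leave the stability check and the tie-case arithmetic as a verified plan rather than writing them out, but the observations you cite are exactly the ones the paper uses.
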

\begin{proof}

Let $M$ be a (complete) weakly stable matching in $\Gamma$. Recall that $\S'=\{i_1,\ldots,i_L\}$ is the set of residents in $\Gamma$ with a single tie at the head of the list. Define the following set of indices in $M$: 
\[
\texttt{Idx}(M) = \{(\ell,r): \ r=\text{rank}_{i_\ell}(j), \ (i_\ell,j)\in M\cap(\S'\times\C)\}.
\]
The set $\texttt{Idx}(M)$ contains the information of the pairs $\S'\times\C$ that are matched in $M$.  Given $M$ and \texttt{Idx}, we now define the following sets that will be helpful in this proof. First, define the set of residents
\[
\W^M = \{w_{\ell, r}\in \W: \ (\ell,r)\in \texttt{Idx}(M)\},
\]
and define the sets of hospitals
\begin{align*}
\V^{M,0} &= \{v^0_{\ell,r}\in \V: \ (\ell,r)\in \texttt{Idx}(M) \},\\
\V^{M,1} &= \{v^1_{\ell,h}: \ (\ell,r)\in \texttt{Idx}(M), \ h\in[n]\setminus\{r\}\}.
\end{align*}
We now provide an allocation of extra spots $\mathbf{t}$ with a total budget $B=L\cdot n$ and a stable matching $\hat{M}_{\mathbf{t}}$ in $\hat\Gamma_{\mathbf{t}}$. 

\begin{itemize}
\item \emph{Allocation of extra spots.} 
We assign one extra position to each hospital in $\V^{M,0}\cup\V^{M,1}$. For the rest of the hospitals, we assign 0 extra capacity. We denote this allocation $\mathbf{t}$. Formally, we have
\[
t_u = \left\{\begin{matrix}1 & u\in \V^{M,0}\cup \V^{M,1}\\ 0 & \text{otherwise}\end{matrix}\right.
\]
Since $L= |\S'|$, all of the extra positions $B=L\cdot n$ were used. 

\item \emph{Matching.} For each $(\ell,r)\in\texttt{Idx}(M)$ with $j$ such that $r = \text{rank}_{i_\ell}(j)$ in $\Gamma$, we match the following pairs in $\hat{M}_{\mathbf{t}}$: $(w_{\ell,r},j)$, $(y_\ell, v^0_{\ell,r} )$, and $(w_{\ell,h}, v^1_{\ell,h} )$ for $h\in [n]\setminus \{ r \} $. Note that if $j$ is ranked first by $i_\ell$, the hospital is listed first or second in the tie. If $j$ is listed first, then $r=1$ and we match the pair $(w_{\ell,1},j)$, otherwise, $r=2$ and we match the pair $(w_{\ell,2},j)$.
For each $(i,j)\in M$ with $i\in \S''$, we match the pair $(i,j)$ in $\hat{M}_{\mathbf{t}}$, where $j$ is the corresponding copy in $\C$; recall that $\S''$ is the set of residents with a strict preference list. Formally, matching $\hat{M}_{\mathbf{t}}$ is as follows:
\begin{align*}
\hat{M}_{\mathbf{t}} =& \{(i,j): \ (i,j)\in M\cap(\S''\times\C)\}\\
& \cup \{(w_{\ell,r},j): \ r=\text{rank}_{i_\ell}(j), \ (i_\ell,j)\in M\cap(\S'\times\C)\}\\
& \cup \{(y_\ell, v^0_{\ell,r}): \ (\ell,r)\in \texttt{Idx}(M)\}\\
&\cup \{(w_{\ell,h},v^1_{\ell,h}): \ (\ell,r)\in \texttt{Idx}(M), \ h\in[n]\setminus\{r\}\}.
\end{align*}
\end{itemize}

Let us verify that $\hat{M}_{\mathbf{t}}$ is a stable matching $\hat{\Gamma}_{\mathbf{t}} $. First, note that residents $i \in\S''$ and hospitals $j\in \C$ cannot create blocking pairs because of their stability in $M$. Now, let us check the stability of the pairs in each village $\B_{i_\ell}$, where $i_\ell\in\S'$ with $\ell\in[L]$. Consider $j\in\hat{\C}$ and assume for now that $j$ is ranked first by $i_\ell$, \ie, $j$ is listed first or second in the tie: $j = j_{r}$ with $r = 1,2$. The pairs matched in village $\B_{i_\ell}$ are $(w_{\ell, r},j)$, $(y_\ell, v^0_{\ell,r})$ and $(w_{\ell,h}, v^1_{\ell,h})$ for $h\in[n]\setminus\{r\}$.
\begin{itemize}
\item The pair $(w_{\ell, r},j)$ is clearly stable; in fact, $w_{\ell, r}$ cannot be matched to any of the hospitals in $v^1_{\ell,r}$ and $\V_{\ell,0}\setminus \{v_{\ell,r}^0\}$ because they have capacity 0. If $r=2$, $w_{\ell, r} $ cannot be matched to any hospital in $\C^0_{n}$ because they all have capacity 0. Also, $j$ 
cannot create a blocking pair. Indeed, all the residents $w_{\ell', r'}$ ranked in its preference list before $w_{\ell, r}$ are matched to hospitals of the form $v^1_{\ell', r'}$ that they rank first. The case in which $r=1$ is analogous.
\item For $h\in[n]\setminus\{r\}$, $w_{\ell,h}$ ranks $v^1_{\ell,h}$ first and vice-versa, hence the $n-1$ pairs $(w_{\ell,h}, v^1_{\ell,h})$ are stable.
\item If $r=2$, then $y_\ell$ ranks $v^0_{\ell,r}$ first, and $v^0_{\ell,r}$ cannot be matched to any of the residents in $\W_\ell\setminus \{ w_{\ell,r} \}$ because of the previous point; therefore, pairs $(y_\ell, v^0_{\ell,r})$ are stable when $r =2$. If $r=1$, then  $y_\ell$ ranks $v^0_{\ell,r}$ second, and $y_\ell$ cannot be matched to $v^0_{\ell,2}$ because it has capacity 0. As before, $v^0_{\ell,r}$ cannot create a blocking pair with any of the residents in $\W_\ell\setminus \{ w_{\ell,1} \}$ because they are matched to their most preferred hospital. Therefore, the pair $(y_\ell, v^0_{\ell,r})$ is also stable when $r = 1$. 
\end{itemize}
The case in which $j$ is ranked third or more by $i_\ell$ is analogous to the case in which $j$ is of the form $j_{r}$ with $r =2$ in $i_\ell$'s preference list. Therefore, $\hat{M}_{\mathbf{t}}$ is a stable matching in $\hat{\Gamma}_{\mathbf{t}}$.


Next, we compute the average hospital rank in $M$ and $\hat{M}_{\mathbf{t}}$. In $M$, we can distinguish whether a resident is matched to a hospital ranked first or not, and we can distinguish if a resident is in $\S'$ or $\S''$. Let $K''$ be the average hospital rank of residents in $\S''$, $K^t$ be the average hospital rank of the residents in $\S'$ that are matched to a hospital in their ties, and $K^s$ be the average hospital rank of the residents in $\S'$ that are matched to a hospital they rank third or more. Note that $K^t$ is also the number of residents matched to a hospital they rank first. The average hospital rank of $M$ is $K^M=K'' +K^t + K^s$. We now show that $\avg(\hat{M}_{\mathbf{t}})$ is $n \cdot K'' + n \cdot K^t + n \cdot K^s + (n+1) \cdot (K^t + L - K^t  ) +  L\cdot(n-1)$: 
\begin{itemize}
\item The first term, $n \cdot K''$, is given by the contribution from the residents in $\S''$.
\item The second term is given by the pairs $(w_{\ell,r},j)$, $(y_\ell, v^0_{\ell,r} )$ (for $r=1$ or $r=2$) in the villages $\B_{i_\ell}$ of the residents $i_\ell$ in $M$ that are matched to a hospital they rank first. 
\item The third contribution is given by the pairs  $(w_{\ell,r},j)$, $(y_\ell, v^0_{\ell,r} )$ (for $r\geq 3$) in the villages $\B_{i_\ell}$ of the residents $i_\ell$ in $M$ that are matched to a hospital they rank third or more. 
\item The forth term, $L\cdot(n-1)$, is given by the pairs of the form $(w_{\ell,h}, v^1_{\ell,h})$ for $h\neq r$, of which there are $n-1$ in each of the $L$ villages.
\end{itemize}
If we rearrange the terms, we obtain $\avg(\hat{M}_{\mathbf{t}})= n \cdot K'' + n \cdot K^t + n \cdot K^s + (n+1) \cdot (K^t + L - K^t  ) +  L\cdot(n-1) =  n \cdot K^M + 2 n \cdot L   $.
\end{proof}

%
In the next result, we show that the allocation vector and the stable matching constructed in Lemma~\ref{lemma_first_complexity} correspond to the solution with the minimum average hospital rank, as long as the original matching is of minimum average hospital rank.
%

\begin{lemma}\label{lemma_second_complexity}
Consider a weakly stable matching $M$ in $\Gamma$ of minimum average hospital rank. Then, the allocation $\mathbf{t}$ and the stable matching $\hat{M}_{\mathbf{t}}$ constructed in Lemma~\ref{lemma_first_complexity} are the solutions of minimum average hospital rank for $\hat{\Gamma}$ when $B= L \cdot n$.
\end{lemma}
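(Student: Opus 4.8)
The plan is to prove optimality by decoding an arbitrary optimal solution back into a weakly stable matching of $\Gamma$. Concretely, I would take any feasible pair $(\mathbf{t},\hat{M}_{\mathbf{t}})$ for $\hat\Gamma$ respecting the budget $B=L\cdot n$ and minimizing $\avg$, show that it is forced to have the village structure produced in Lemma~\ref{lemma_first_complexity}, and read off from it a weakly stable matching $M'$ of $\Gamma$ satisfying $\avg(\hat{M}_{\mathbf{t}})=n\cdot\avg(M')+2n\cdot L$. Since $M$ minimizes $\avg$ over weakly stable matchings of $\Gamma$, this gives $\avg(\hat{M}_{\mathbf{t}})\geq n\cdot K^M+2n\cdot L$, which is exactly the value attained by the construction of Lemma~\ref{lemma_first_complexity}; hence that construction is optimal.

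First I would pin down where the budget is spent. The decisive quantitative observation is that the hospitals of $\Z$ and $\X$ sit near the bottom of every list, so routing even a single resident there costs on the order of $n^3$ and pushes $\avg$ above the target $\bar{K}=n\cdot K+2n\cdot L$, which is $O(n^3)$ because $K\leq n^2$. Consequently, in an optimal solution every resident must be matched either to a copy in $\C$ or inside its own village, the latter requiring extra capacity since each $\V_\ell$ starts empty. Since $y_\ell$ reaches $\Z$ immediately after $\V_{\ell,0}$ in its list, each village must consume at least one $\V_{\ell,0}$ spot; and since the total capacity of $\C$ is $n$ while the residents of $\S''$ already demand $n-L$ of it, the $L$ villages can collectively place at most $L$ residents in $\C$. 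A counting argument then makes everything tight: each village places exactly one resident in $\C$ and its remaining $n$ residents inside $\V_\ell$, consuming exactly $n$ of the $L\cdot n$ budgeted spots, so no spot is ever assigned to $\C^0$ or $\Z$.

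Next I would show that stability forces the intra-village assignment to match the pattern of Lemma~\ref{lemma_first_complexity}, thereby selecting a single index $r_\ell\in[n]$ per village. The head-of-list preferences $v^1_{\ell,h}:w_{\ell,h},\ldots$ force the $n-1$ residents $w_{\ell,h}$ that remain inside the village onto their top choices $v^1_{\ell,h}$; the one village resident sent to $\C$ is then some $w_{\ell,r_\ell}$, which can only be matched to its unique acceptable copy $j_{\sigma_{r_\ell}}\in\C$, the hospital that $i_\ell$ ranks $r_\ell$-th. The crucial stability check is that the single $\V_{\ell,0}$ spot must go to $v^0_{\ell,r_\ell}$ and be taken by $y_\ell$: were the spot placed at some $v^0_{\ell,h}$ with $h\neq r_\ell$, then $w_{\ell,r_\ell}$ (which prefers every hospital of $\V_{\ell,0}\setminus\{v^0_{\ell,r_\ell}\}$ to $j_{\sigma_{r_\ell}}$) together with $v^0_{\ell,h}$ (which prefers every $w$ to $y_\ell$) would block. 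Reading $r_\ell$ off each village and matching $i_\ell$ to $j_{\sigma_{r_\ell}}$, together with matching each $i\in\S''$ exactly as $\hat{M}_{\mathbf{t}}$ matches it into $\C$, defines the candidate matching $M'$ in $\Gamma$.

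The hard part will be the final verification that $M'$ is genuinely \emph{weakly} stable in $\Gamma$, since the ties at the heads of the residents' lists must be handled with care and stability has to be transferred from $\hat\Gamma_{\mathbf{t}}$ back to $\Gamma$. I would argue contrapositively: a blocking pair $(i,j)$ of $M'$ with strict mutual preference lifts to a blocking configuration of $\hat{M}_{\mathbf{t}}$, contradicting its stability---if $i\in\S''$ this is immediate from the copied lists, and if $i=i_\ell\in\S'$ the distinguished resident $w_{\ell,r_\ell}$ (or $w_{\ell,1}$, $w_{\ell,2}$ for the two hospitals inside the tie) together with the copy of $j$ would block $\hat{M}_{\mathbf{t}}$. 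Once weak stability of $M'$ is established, the cost identity $\avg(\hat{M}_{\mathbf{t}})=n\cdot\avg(M')+2n\cdot L$ is exactly the accounting of Lemma~\ref{lemma_first_complexity} applied to this structure, and the minimality of $M$ finishes the proof.
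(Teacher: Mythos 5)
Your overall strategy---take an arbitrary budget-feasible optimal pair $(\mathbf{t},\hat M_{\mathbf t})$, show it must carry the village structure, decode it into a weakly stable matching $M'$ of $\Gamma$ with $\avg(\hat M_{\mathbf t})=n\cdot\avg(M')+2nL$, and invoke minimality of $M$---is a legitimate and arguably cleaner route than the paper's, which instead starts from the canonical solution of Lemma~\ref{lemma_first_complexity} and argues that every re-allocation of extra seats (to $\C$, $\C^0$, $\X$, $\Z$, or across villages) only increases the objective. The decoding endgame you describe (lifting a blocking pair of $M'$ to one of $\hat M_{\mathbf t}$, with the tie handled via $w_{\ell,1}/w_{\ell,2}$) is correct and is exactly the right way to close the argument.

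The genuine gap is in your first step, where you pin down how the budget is spent, and it occurs at precisely the point where the reduction is most delicate. First, the claim that ``routing even a single resident to $\Z$ costs on the order of $n^3$'' is false: $\Z$ appears at rank $n+1$ in $y_\ell$'s list and at rank roughly $n+(n-1)h+2$ in $w_{\ell,h}$'s list, so a resident matched to a $\Z$-hospital that has received an extra seat can sit there at cost $O(n^2)$. The $n^3$ penalty materializes only for a resident who must fall \emph{past} all of $\Z$ because $\Z$ retains capacity $0$; ruling out seats inside $\Z$ therefore requires an exchange argument (each such seat is dominated by a village seat), not a rank lower bound. Second, and more seriously, you never treat the cases where budget is diverted to $\C$ or to $\C^0$, and your counting step assumes them away: the assertion that ``the residents of $\S''$ already demand $n-L$ of $\C$'' presupposes that no seat lands in $\C^0$, whose leading hospitals every resident of $\S''$ ranks \emph{first}, ahead of any copy in $\C$. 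An extra seat there (or a second seat at a hospital of $\C$) triggers a cascade of strict improvements in the embedded $\Gamma$-matching whose total gain can be as large as roughly $n\cdot(n^2-1)$ in $\hat\Gamma$-ranks, i.e., the same order as the $\geq n^3$ loss incurred by the village resident who is pushed past $\Z$; showing the loss still wins requires an explicit bound on the length and per-step gain of this cascade, which is the heart of the paper's argument and is absent from your proposal. A smaller issue of the same flavour: the intra-village pattern is not forced by stability alone, since the adversary may place several seats in $\V_{\ell,0}$ and stably park some residents $w_{\ell,h}$ there at ranks $2,\dots,n$; excluding this again needs a cost comparison, not just the head-of-list preferences of $\V_{\ell,1}$.
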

\begin{proof}
Let $M$ be a stable matching in $\Gamma$ of minimum average hospital rank. Recall the instance $\hat{\Gamma}$ constructed in Section~\ref{sec:exp_reduction}, the allocation 
\[
t_u = \left\{\begin{matrix}1 & u\in \V^{M,0}\cup \V^{M,1}\\ 0 & \text{otherwise,}\end{matrix}\right.
\]
and the matching 
\begin{align*}
\hat{M}_{\mathbf{t}} &= \{(i,j): \ (i,j)\in M\cap(\S''\times\C)\}\\
& \cup \{(w_{\ell,r},j): \ r=\text{rank}_{i_\ell}(j), \ (i_\ell,j)\in M\cap(\S'\times\C)\}\\
& \cup \{(y_\ell, v^0_{\ell,r}): \ (\ell,r)\in \texttt{Idx}(M)\}\\
&\cup \{(w_{\ell,h},v^1_{\ell,h}): \ (\ell,r)\in \texttt{Idx}(M), \ h\in[n]\setminus\{r\}\},
\end{align*}
constructed in Lemma~\ref{lemma_first_complexity}.
Denote by $\bar{K}= nK^M + 2nL $, which is the average rank of $\hat{M}_{\mathbf{t}}$ in $\hat{\Gamma}$. Now, we will prove that any other feasible allocation $\tilde{\mathbf{t}}$ with total budget $B=L\cdot n$ and any stable matching $\hat{M}_{\tilde{\mathbf{t}}}$ in the expanded instance $\hat{\Gamma}_{\tilde{\mathbf{t}}}$ have $\avg(\hat{M}_{\tilde{\mathbf{t}}})\geq\bar{K}$. 
\footnote{Note that the optimal solution may not use the entire budget, for instance, when every resident is matched with her top choice. However, we can always arbitrarily assign the remaining extra spots without affecting the final average hospital rank.}

Given allocation $\mathbf{t}$, note that it is not optimal to move one extra capacity from a hospital $v^s_{\ell,h}$ to a hospital $j\in \C\cup \C^0 \cup \X \cup \Z$ for $s=0,1$. Indeed,  $\X$ already has $B$ positions available, but since it is at the end of the preference list of every resident, it would be sub-optimal to match a resident to a hospital in it. Similarly, it would be sub-optimal to allocate an extra-capacity to $\Z$, since all the residents are already matched to a hospital they prefer to any hospital in $\Z$. 

Regarding the hospitals in $\C^0$, let us assume we move a capacity from $v^s_{\ell,h}$ to a hospital $h^0\in \C^0$ with $s=0,1$ and $h\in[n]$.  If $v^s_{\ell,h}$ was matched to $y_\ell$, then $y_\ell$ will be matched to some hospital ranked after $\Z$ with a rank of at least $n^3$, making the transfer sub-optimal. Otherwise, $v^s_{\ell,h}$ was matched to resident $w_{\ell,h}$; therefore, resident $w_{\ell,h}$ will be matched to a certain $v^0_{\ell,h'}$, which was previously matched to $y_\ell$; hence, the same reasoning just seen for $y_\ell$ applies, making the transfer of the extra spots sub-optimal. 

The additional cost in the average rank obtained by moving a position from $v^s_{\ell,h}$ to a hospital in $\C$ follows the same reasoning just outlined for $\C^0$. Therefore, this is also a sub-optimal re-allocation, and it is optimal to assign all the extra capacities to hospitals in $\V$. Consequently, $B$ residents are matched to hospitals in $\V$ and the remaining residents are matched to hospitals in $\C\cup \C^0$.

Given the fact above, we only have to focus on feasible allocations in hospitals that belong to $\V$. In the following, we analyze why a different allocation of extra capacities in $\V$ does not lead to a stable matching with a lower average hospital rank.
Since $M$ is a stable matching of minimum average hospital rank in $\Gamma$ and each resident in the new instance $\hat{\Gamma}$ ranks in the first $n^3$ positions only one hospital in $\C$, a re-allocation of extra capacities within the same village would result in a matching with a worse objective within the village. Therefore, the re-allocation of extra capacities that could improve the objective is the one obtained by transferring extra positions from one village to another village. 
Consider $i_\ell, i_{\ell'}\in\S'$. We now analyze the effects of moving one extra capacity from village $\B_{i_{\ell'}}$ to village $\B_{i_{\ell}}$. The reason why we are analyzing these transfers of extra capacities is because the corresponding residents are not necessarily matched with their top choice so their ranking and the overall average ranking may improve. 


\begin{itemize}
\item From $v^{0}_{\ell',r'}$ to $v^{0}_{\ell,2}$. Note that, by assumption, both residents $y_{\ell'}$ and $y_\ell$ are matched in $\hat{M}_{\mathbf{t}}$ to $v^{0}_{\ell',r'}$ and $v^{0}_{\ell,r}$ (for a certain $r\leq n$), respectively. If we move one extra position from $v^{0}_{\ell',r'}$ to $v^{0}_{\ell,2}$, then $w_{\ell,r}$ un-matches from $j$ ($r=\text{rank}_{i_\ell}(j)$) and matches to $v^{0}_{\ell,2}$, thus providing a reduction in the objective value between $n$ and $n^2$. The preference list of $v^{0}_{\ell,2}$ prevents $y_\ell$ to be matched to $v^{0}_{\ell,2}$ because $w_{\ell,r}$  is more preferred and $w_{\ell,r}$ prefers $v^{0}_{\ell,2}$ over $j$. As a consequence, $j$ is un-matched and is re-assigned to $y_{\ell'}$ (who was un-matched as a consequence of removing the extra spot of $v^{0}_{\ell',r'}$). By matching $y_{\ell'}$ to $j$, the objective value increases by at least $n^3$. Therefore, it is sub-optimal to move an extra capacity in this way. 

\item All the remaining cases follow a similar reasoning.
\end{itemize}

Therefore, there is no allocation $\tilde{\mathbf{t}}\neq \mathbf{t}$ and a stable matching $\hat{M}_{\tilde{\mathbf{t}}}$ with an objective value strictly lower than $\bar{K}$. 


\end{proof}

\subsection{{\CAexp} is NP-complete}\label{sec:main_theorem_reduction}
In the following, we prove the main result of this section, Theorem \ref{HR_ext_NP_complete}.
\begin{proof}[Proof of Theorem \ref{HR_ext_NP_complete}]
{\CAexp} is clearly in NP since given $\mathbf{t}$ and a matching $\hat{M}_{\mathbf{t}}$ in instance $\hat{\Gamma}_{\mathbf{t}}$, we can verify in polynomial time whether $\hat{M}_{\mathbf{t}}$ is stable, whether the budget $B=L\cdot n$ is satisfied and whether its objective value is less than the target value. We now show that {\CAexp} is NP-complete.

From Corollary \ref{corollary-Manlove}, we know that {\MinW} is NP-complete. 
Consider the reduction given in Section \ref{sec:exp_reduction}. In the constructed instance $\hat\Gamma$ of {\CAexp}, we set the budget to $B=L\cdot n$ and the target value to $\bar{K}=n\cdot K+2n\cdot L$, where $K$ is the target value of the instance of {\MinW}.


First, suppose that the answer to the instance of {\MinW} is NO, \ie, there is no weakly stable matching $M$ with an average hospital rank less or equal than $K$. Let $M$ be a weakly stable matching with minimum average hospital rank $K^M$; note that $K^M>K$. Next, we prove that there is no allocation of extra positions and a stable matching in the respective instance $\hat\Gamma$ of {\CAexp} with an objective value less or equal than $nK + 2nL$. Indeed, in Lemma \ref{lemma_first_complexity}, we show that there is an allocation $\mathbf{t}$ and a stable matching $\hat{M}_{\mathbf{t}}$ with $\avg(\hat{M}_{\mathbf{t}})=nK^M + 2nL$. In Lemma \ref{lemma_second_complexity}, we prove that these are the best solutions since $M$ is the optimal matching. Therefore, $nK^M+ 2nL> nK + 2nL = \bar{K}$, which means that the answer for the instance of {\CAexp} is also NO.

%

On the other hand, consider a YES instance of {\MinW}. Then, there is a weakly stable matching  $M$  with an average hospital rank of $K^M\leq K$. Therefore, the allocation $\mathbf{t}$ and the stable matching $\hat{M}_{\mathbf{t}}$ in $\hat\Gamma$ constructed in Lemma \ref{lemma_first_complexity} have an objective value of  $nK^M+2nL\leq nK+2nL=\bar{K}$. Hence, the instance of {\CAexp} has a YES answer. 

Let us prove now that, for any $\varepsilon$, {\CAexpOpt} is not approximable within a factor of $(\bar{n}/2)^{\frac{1-\epsilon}{2}}$, where $\bar{n}$ is the number of residents, unless P$=$NP.
Consider an instance $\Gamma$ of {\MinW} with $n$ residents and $L\geq1$ of them with a tie in their preference list. Let $M^{yes}$ and $M^{no}$ be the stable matchings of minimum average hospital rank for the cases in which the answer of the decision problem {\MinW} is YES and NO, respectively. Corollary \ref{corollary-Manlove} implies that, for any $\varepsilon>0$, $\avg(M^{no})\geq n^{1-\epsilon}\cdot\avg(M^{yes})$. Now, consider the reduction presented in Section \ref{sec:exp_reduction} from instance $\Gamma$ to an instance $\hat{\Gamma}$ of {\CAexp}. Lemma \ref{lemma_second_complexity} implies that there are allocations $\mathbf{t}$ and $\mathbf{t}'$, and matchings $\hat{M}^{yes}_{\mathbf{t}}$ and $\hat{M}^{no}_{\mathbf{t}'}$ for the respective YES and NO answers of {\CAexp} such that
\begin{align*}
\avg(\hat{M}^{yes}_{\mathbf{t}}) &= n\cdot\avg(M^{yes}) + 2nL\\
\avg(\hat{M}^{no}_{\mathbf{t}'}) &= n\cdot\avg(M^{no}) + 2nL.
\end{align*}
Recall that the reduction in Section \ref{sec:exp_reduction} constructs $\hat{\Gamma}$ with $\bar{n}:=|\hat{\S}|\leq 2n^2$ residents. Then, for any $\varepsilon>0$, we have
\begin{align*}
\frac{\avg(\hat{M}^{no}_{\mathbf{t}'})}{\avg(\hat{M}^{yes}_{\mathbf{t}})}&=\frac{n\cdot\avg(M^{no}) + 2nL}{n\cdot\avg(M^{yes}) + 2nL}\geq \frac{\avg(M^{no})}{\avg(M^{yes})}\\
&\geq n^{1-\epsilon}\geq \left(\frac{\bar{n}}{2}\right)^{\frac{1-\varepsilon}{2}},
\end{align*}
where the first inequality is because $f(x)=(a+x)/(b+x)$ is increasing when $b>a$. This completes the proof.
\end{proof}

Note that the proof can be slightly modified to obtain a similar result for HR with incomplete preference lists, as long as the condition $\sum_{j\in\C}c_j\geq |\S|$ is met (\eg, we can remove one hospital in $\X$ from the preference list of a resident).

\section{The Capacity Reduction Problem}\label{sec:reduction}





In this section, we focus on Problem \ref{problem_reduction} that looks for the reduction of capacities such that the residents' allocations are impacted the least.
Our main result establishes the computational complexity of this problem. Formally, our result is the following.
\begin{theorem}\label{theorem_reduction_standard}
{\CAred} is NP-complete. 
Moreover, for any $\varepsilon>0$, {\CAredOpt} cannot be approximated within a factor of $(\bar{n}/2)^{(1-\varepsilon)/2}$, where $\bar{n}$ is the number of residents, unless \emph{P=NP}.
\end{theorem}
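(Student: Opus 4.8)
The plan is to mirror the architecture of the capacity expansion proof in Section~\ref{sec:expansion}, reducing from \MinW\ (which is NP-complete and inapproximable within $\bar n^{1-\varepsilon}$ by Corollary~\ref{corollary-Manlove}) to \CAred. As before, I would start from an \MinW\ instance $\Gamma=\langle\S,\C,\succ,\mathbf c\rangle$ with $|\S|=|\C|=n$, all capacities one, ties occurring only at the head of the lists of the $L$ residents in $\S'$ (one tie of length two each), and strict lists for the $n-L$ residents in $\S''$. The key structural reuse is the \emph{village}: I build an instance $\hat\Gamma$ with the same copies of $\C$ and $\S''$, the auxiliary hospital sets $\C^0$, $\Z$, $\X$, and, for each $i_\ell\in\S'$, a village $\B_{i_\ell}$ with residents $\W_\ell\cup\{y_\ell\}$ and hospitals $\V_\ell=\V_{\ell,0}\cup\V_{\ell,1}$. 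The difference from the expansion construction is the direction of capacity perturbation: instead of giving the $\V$-hospitals initial capacity $0$ and \emph{adding} $B=L\cdot n$ spots, I would give every relevant hospital capacity $1$ up front, so that the reduction budget $B=L\cdot n$ must be spent to \emph{remove} exactly the spots that, in the expansion proof, were added. Concretely, I would set $\hat c_v=1$ for $v\in\V$, keep $\hat c_j=1$ for $j\in\C\cup\X$, and choose the capacities of $\C^0$ and $\Z$ (and the sizes of those sets) so that removing a spot anywhere outside $\V$ is strictly worse — exactly the analogue of the \emph{sub-optimal transfer} arguments in Lemma~\ref{lemma_second_complexity}. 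The target and budget are the same: $\bar K=n\cdot K+2n\cdot L$ and $B=L\cdot n$.

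Given this setup, I would prove two lemmata paralleling Lemmata~\ref{lemma_first_complexity} and~\ref{lemma_second_complexity}. The first lemma: for every weakly stable matching $M$ in $\Gamma$ with $\avg(M)=K^M$, there is a reduction vector $\mathbf t$ with $\sum_j t_j\ge B$, $\hat c_j-t_j\ge0$, and a stable matching $\hat M_{\mathbf t}$ in $\Gamma_{-\mathbf t}$ with $\avg(\hat M_{\mathbf t})=n\cdot K^M+2n\cdot L$. The reduction $\mathbf t$ removes one unit from each hospital in $\V^{M,0}\cup\V^{M,1}$ (using the sets $\texttt{Idx}(M)$, $\W^M$, $\V^{M,0}$, $\V^{M,1}$ defined exactly as before), which is $L\cdot n$ removals in total, leaving those hospitals at capacity $0$ and thus unmatchable — reproducing inside the reduced instance the same assignment $\hat M_{\mathbf t}$ as in the expansion proof, so the stability verification and the cost computation $n\cdot K''+n\cdot K^t+n\cdot K^s+(n+1)L+L(n-1)=n\cdot K^M+2n\cdot L$ go through verbatim. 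The second lemma: when $M$ has minimum average hospital rank, this $(\mathbf t,\hat M_{\mathbf t})$ is optimal for \CAred\ at budget $B=L\cdot n$. Here I would argue, as in Lemma~\ref{lemma_second_complexity}, that it is sub-optimal to remove a spot from any hospital in $\C\cup\C^0\cup\X\cup\Z$ (forcing some $y_\ell$ or displaced $w_{\ell,h}$ onto a hospital ranked after $\Z$, incurring cost at least $n^3$), so the budget is optimally spent entirely inside $\V$; and that removing spots within $\V$ in any pattern other than the one induced by a \emph{minimum-cost} $M$ yields a strictly larger objective, by the same inter-village transfer analysis.

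With both lemmata in hand, the proof of Theorem~\ref{theorem_reduction_standard} is identical in form to that of Theorem~\ref{HR_ext_NP_complete}. Membership in NP is immediate: given $\mathbf t$ and $\hat M_{\mathbf t}$, one checks in polynomial time that $\sum_j t_j\ge B$, that $\hat c_j-t_j\ge0$, that $\hat M_{\mathbf t}$ is stable in $\Gamma_{-\mathbf t}$, and that $\avg(\hat M_{\mathbf t})\le\bar K$. For NP-hardness, a YES instance of \MinW\ (some $M$ with $K^M\le K$) yields, via the first lemma, a reduction achieving $n\cdot K^M+2n\cdot L\le\bar K$; a NO instance forces the minimum-cost $M$ to have $K^M>K$, so by the second lemma every feasible reduction costs at least $n\cdot K^M+2n\cdot L>\bar K$. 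For inapproximability, I would carry over the same gap amplification: with $\bar n:=|\hat\S|\le 2n^2$ and $\avg(M^{no})\ge n^{1-\varepsilon}\avg(M^{yes})$ from Corollary~\ref{corollary-Manlove},
\[
\frac{\avg(\hat M^{no}_{\mathbf t'})}{\avg(\hat M^{yes}_{\mathbf t})}=\frac{n\cdot\avg(M^{no})+2nL}{n\cdot\avg(M^{yes})+2nL}\ge\frac{\avg(M^{no})}{\avg(M^{yes})}\ge n^{1-\varepsilon}\ge\left(\frac{\bar n}{2}\right)^{\frac{1-\varepsilon}{2}},
\]
using that $x\mapsto(a+x)/(b+x)$ is increasing for $b>a$.

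The main obstacle I anticipate is the careful re-engineering of the baseline capacities and of the auxiliary sets $\C^0$ and $\Z$ so that the reduction semantics exactly reproduce the expansion construction's incentive structure. In the expansion proof, hospitals in $\V$ begin empty and adding a spot is what enables the favorable village assignment; here I must ensure that after removing $L\cdot n$ spots the \emph{same} favorable assignment is both stable and forced, while every alternative way of spending the removal budget — especially removing capacity from a $\C$-copy or from $\C^0$, which could perversely \emph{free} a resident to a better hospital rather than worsen the objective — is provably dominated. Verifying that no capacity reduction outside $\V$ can improve the objective (the reduction-direction analogue of the transfer arguments) is the delicate part, since reducing a hospital's capacity removes, rather than adds, matching possibilities and could in principle redirect a displaced resident upward in someone else's list; the role of $\Z$ (ranked just before the filler set $\X$, with rank on the order of $n^3$) is precisely to make every such forced redirection catastrophically expensive, pinning the optimum to the village-internal reduction pattern.
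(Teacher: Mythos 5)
Your overall architecture is exactly the paper's: the same reduction from {\MinW}, the same instance $\hat\Gamma$ with villages, the same budget $B=L\cdot n$ and target $\bar K=nK+2nL$, the same pair of lemmata, and the same gap-amplification computation at the end. The paper likewise restricts to unit capacities and simply declares that every hospital in $\V_{\ell,0}\cup\V_{\ell,1}$ now starts with capacity $1$, keeping all other capacities and preferences as in Section~\ref{sec:exp_reduction}; so in the paper's construction the hospitals in $\C^0\cup\Z$ still have capacity $0$ and cannot absorb any of the removal budget, which already settles part of the concern you raise at the end.

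There is, however, one concrete error in your first lemma: you remove one unit of capacity from each hospital in $\V^{M,0}\cup\V^{M,1}$, which is the \emph{complement} (within the active part of each village) of the correct removal set. Per village, $\V^{M,0}\cup\V^{M,1}$ consists of $v^0_{\ell,r}$ and the $v^1_{\ell,h}$ with $h\neq r$ for $(\ell,r)\in\texttt{Idx}(M)$ --- exactly the hospitals that must \emph{end up} with capacity $1$, since the target matching contains $(y_\ell,v^0_{\ell,r})$ and $(w_{\ell,h},v^1_{\ell,h})$ for $h\neq r$. Zeroing these out makes your own claimed matching $\hat M_{\mathbf t}$ infeasible, and it leaves capacity on $v^1_{\ell,r}$ and on $\V_{\ell,0}\setminus\{v^0_{\ell,r}\}$, so $w_{\ell,r}$ would be absorbed inside the village instead of being forced down its list to the copy of the original hospital $j$ --- the simulation of $M$ breaks. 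The paper instead removes one spot from $v^1_{\ell,r}$ and from each hospital in $\V_{\ell,0}\setminus\{v^0_{\ell,r}\}$ ($n$ removals per village, $L\cdot n$ in total), which leaves precisely $\V^{M,0}\cup\V^{M,1}$ at capacity $1$ and reproduces the expansion proof's configuration verbatim. With that single inversion corrected, your two lemmata, the YES/NO case analysis, and the $(\bar n/2)^{(1-\varepsilon)/2}$ bound go through exactly as in the paper, whose own treatment of the optimality argument is in fact terser than yours (it is dismissed as ``analogous to Lemma~\ref{lemma_second_complexity}'').
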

\begin{proof}
We restrict our analysis to the case in which hospitals' capacities are all 1.  
Recall that Problem \ref{problem_reduction} assumes that reducing the capacities of hospitals does not leave any resident un-assigned. 

First, clearly {\CAred} is in NP, since for a given vector $\mathbf{t}$ and a matching $M_{\mathbf{t}}$, we can verify in polynomial time whether $\mathbf{t}$ satisfies the lower bound on the number of spots to be removed, whether $M_{\mathbf{t}}$ is stable in $\Gamma_{-\mathbf{t}}$ and if the target value is attained. Now, we concentrate on showing that the problem is NP-complete.

The rest of the proof follows the same idea than the proof of Theorem~\ref{HR_ext_NP_complete}.
 We build a reduction from an instance $\Gamma$  of {{\MinW}} into an instance $\hat{\Gamma}$ of {\CAred}.  We assume that $\Gamma$ satisfies $|\S|=|\C|=n$, ties occur only in residents' lists, and each of their preference list has at most one tie of length 2 positioned at the head of it. Recall also that we denoted by $\S'$ the set of residents with a tie in their preference list and by $\S''$ the set of residents with strict preference lists. The corresponding  $\hat\Gamma$ is defined as in the reduction presented in the proof of Theorem~\ref{HR_ext_NP_complete}, with the following difference:
\begin{itemize}
    \item  For every \emph{village} $\B_\ell$ defined for $i_\ell\in\S'$: Each hospital $v^1_{\ell,h}$ for $h\in[n]$ has capacity 1 and each hospital in $\V_{\ell,0}$ has capacity 1. All the remaining preferences and capacities remain as in Section \ref{sec:exp_reduction}.
\end{itemize}

%

Given a weakly stable matching $M$ in the instance $\Gamma$ with an average hospital rank $K^{M}$, we provide a reduction of the capacities $\mathbf{t}$ that respects the budget $B=n\cdot L$ and we build a stable matching $\hat{M}_{\mathbf{t}}$ in $\hat{\Gamma}_{\mathbf{t}}$ with an average hospital rank $\bar{K}=n K^M+  2nL$. 

\begin{itemize}
\item \emph{Reduction of capacities.} We remove $n$ spots from each village $B_\ell$ in the following way: Assume in $M$ we have the pair $(i_\ell, j)$, where $j$ is such that $r=\text{rank}_{i_\ell}(j)$. Then, we reduce by 1 the capacities of $v^1_{\ell,r}$  and of each hospital in $\V_{\ell,0}\setminus\{ v^0_{\ell,r}\}$. 
\item \emph{Matching.} We build the matching $\hat{M}_{\mathbf{t}}$ as follows: We match $(w_{\ell,r}, j)$, $(y_{\ell}, v^0_{\ell,r})$, $\{(w_{\ell,h}, v^1_{\ell,h})\}_{h\neq r}$. The remaining pairs are the same as in the proof of Lemma~\ref{lemma_first_complexity}.
\end{itemize}
%

The rest of the proof is analogous to the proofs of Lemma~\ref{lemma_first_complexity}, Lemma~\ref{lemma_second_complexity}, and Theorem~\ref{HR_ext_NP_complete}. 

 \end{proof}

Note that the proof can be slightly modified to obtain a similar result for HR with incomplete preference lists, as long as the condition $-B+\sum_{j\in\C}c_j\geq |\S|$ is met.

\section{Extensions}\label{sec:extensions}

In this section, we investigate the variants of Problems \ref{problem_def} and \ref{problem_reduction} where the decision-maker has budgets for different subsets of hospitals. In the remainder of this section, we say that $\P = \{\C_1,\ldots\C_q\}$ is a partition of the set of hospitals $\C$ if $\cup_{k\in[q]}\C_k= \C$ and $\C_k\cap\C_{k'}=\emptyset$ for all $k,k'\in[q]$ with $k\neq k'$.

\subsection{Allocating Extra Spots to a Partition of Hospitals}\label{sec: subsets of extra capacities}
We generalize Problem \ref{problem_def} to the setting where the set of hospitals is partitioned and we seek to find an allocation of extra spots such that each part has a specific budget. Formally, we study the following problem. 
\begin{problem}[\CAexpsub]
\label{problem_expansion_subsets}
\boxxx{
{\sc instance}:  A \emph{{\CA}} instance $\Gamma=\langle \S,\C,\succ, \mathbf{c} \rangle$, a partition $\P=\{\C_1,\ldots\C_q\}$ of $\C$, budget for each part $\{ B_k \in\ZZ_+  \colon \,  k \in [q]\}$, and a non-negative integer target value $K \in \mathbb{Z}_+$.\\
{\sc question}: Is there a non-negative vector $\mathbf{t}\in\ZZ_+^{\C}$ and a matching $M_\mathbf{t}$ such that 
\[
\text{\emph{AvgRank}}(M_{\mathbf{t}}) \leq K,
\]
where $\mathbf{t}$ is such that $\sum_{j\in\C_k}t_j\leq B_k$ for each $k\in[q]$ 
and $M_\mathbf{t}$ is a stable matching in instance $\Gamma_{\mathbf{t}} $?
}
\end{problem}

The next result can be directly obtained by considering a single set of hospitals in the partition, \ie, $q= 1$ and $\P = \C$, and by using Theorem \ref{HR_ext_NP_complete}.
\begin{corollary}\label{HR_extsub_NP_complete}
{\CAexpsub} is NP-complete. 
\end{corollary}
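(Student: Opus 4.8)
The plan is to obtain Corollary~\ref{HR_extsub_NP_complete} as an immediate specialization of Theorem~\ref{HR_ext_NP_complete}, exploiting the fact that \CAexpsub{} with a trivial partition is \emph{exactly} \CAexp. First I would observe that membership in NP is inherited directly: given a candidate vector $\mathbf{t}$ and matching $M_{\mathbf{t}}$, one checks in polynomial time that $\sum_{j\in\C_k}t_j\le B_k$ for each $k\in[q]$, that $M_{\mathbf{t}}$ is stable in $\Gamma_{\mathbf{t}}$, and that $\avg(M_{\mathbf{t}})\le K$, exactly as in the proof of Theorem~\ref{HR_ext_NP_complete}.

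For hardness, the key step is the reduction from \CAexp{} to \CAexpsub. Given any instance of \CAexp{} with budget $B$ and target $K$, I would set $q=1$ and take the trivial partition $\P=\{\C\}$ consisting of the single part $\C_1=\C$, with part-budget $B_1=B$. Under this choice, the per-part constraint $\sum_{j\in\C_1}t_j\le B_1$ becomes $\sum_{j\in\C}t_j\le B$, which is precisely the single global budget constraint of \CAexp. Hence the \CAexpsub{} instance has a feasible $\mathbf{t}$ and matching $M_{\mathbf{t}}$ with $\avg(M_{\mathbf{t}})\le K$ if and only if the original \CAexp{} instance does. The transformation is clearly polynomial (it only records one part), so it is a valid Karp reduction, and since \CAexp{} is NP-complete by Theorem~\ref{HR_ext_NP_complete}, \CAexpsub{} is NP-hard; combined with NP-membership, it is NP-complete.

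There is essentially no main obstacle here, since \CAexp{} is literally the special case $q=1$ of \CAexpsub; the only point deserving a sentence of care is to confirm that the two budget constraints coincide when the partition is trivial, which they do. If one wishes to also mirror the inapproximability statement of Theorem~\ref{HR_ext_NP_complete} for the optimization version \CAexpsubOpt, the same trivial-partition reduction transfers the bound verbatim, because the objective $\avg$ and the constructed instance $\hat\Gamma$ are unchanged; the ratio $(\bar n/2)^{(1-\varepsilon)/2}$ established in Lemma~\ref{lemma_second_complexity} and Theorem~\ref{HR_ext_NP_complete} then applies directly to \CAexpsubOpt{} as well.
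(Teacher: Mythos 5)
Your proposal is correct and matches the paper's own argument exactly: the paper also obtains Corollary~\ref{HR_extsub_NP_complete} by taking the trivial partition $q=1$, $\P=\{\C\}$ with $B_1=B$, so that {\CAexpsub} specializes to {\CAexp} and NP-completeness follows from Theorem~\ref{HR_ext_NP_complete}. Your additional remarks on NP-membership and on transferring the inapproximability bound are consistent with the paper's treatment.
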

Denote by {\CAexpsubOpt} the optimization version of {\CAexpsub}, \ie, the problem of finding an allocation of extra capacities and a stable matching in the expanded instance of minimum average hospital rank. 
In the following result, we show the approximation complexity of {\CAexpsubOpt}.

\begin{theorem}\label{theorem_stud_set_expansion}
For any $\varepsilon>0$, {\CAexpsubOpt} is not approximable within a factor of $n^{1-\varepsilon}$, unless P=NP, where $n$ is the number of residents. This result holds even if the partition $\P=\{\C_1,\ldots,\C_q\}$ is such that each $\C_k$ contains at most two hospitals and $B_k\in \{0,1\}$ for every $k\in [q]$.
\end{theorem}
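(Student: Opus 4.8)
The plan is to reduce from {\MinW}---which by Corollary~\ref{corollary-Manlove} is NP-hard and inapproximable within $\bar n^{1-\varepsilon}$ for $\bar n=|\C|$---reusing the global architecture of the proof of Theorem~\ref{HR_ext_NP_complete}, but replacing each \emph{village} by a constant-size gadget. The essential observation is that the partition together with the per-part budgets makes the heavy forcing machinery of the village (the sets $\Z$, $\X$, $\C^0$ and the $n$ rank-encoding residents $w_{\ell,h}$) unnecessary: the budget can now be pinned down directly by assigning budget $0$ to every part we do not want to touch. This is exactly what upgrades the inapproximability factor. Since the village inflates the instance to $\Theta(\bar n^2)$ residents, Theorem~\ref{HR_ext_NP_complete} only extracts $(\bar n/2)^{(1-\varepsilon)/2}$; a reduction producing $O(\bar n)$ residents instead transfers the full $\bar n^{1-\varepsilon}$ gap into an $n^{1-\varepsilon}$ gap, where $n$ is the number of residents of the constructed instance.

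Concretely, I would keep the copies of the hospitals $\C$ (each of capacity $1$) and of the strict residents $\S''$, and attach to each tied resident $i_\ell\in\S'$ with head tie $(j_a,j_b)$ a gadget consisting of two \emph{branch residents} $i_\ell^a,i_\ell^b$ and two private hospitals $h_\ell^a,h_\ell^b$ of capacity $0$. The list of $i_\ell^a$ is $h_\ell^a, j_a, j_{\sigma_3},\dots,j_{\sigma_n}$ (the list of $i_\ell$ with $j_b$ deleted and $h_\ell^a$ prepended), symmetrically for $i_\ell^b$; each $h_\ell^a$ ranks $i_\ell^a$ first and is ranked last by everyone else. The part $P_\ell=\{h_\ell^a,h_\ell^b\}$ receives budget $B_\ell=1$, while every hospital of $\C$ and a set of last-resort hospitals $\X$ (placed at the tail of every list for completeness) form singleton parts of budget $0$; thus every part has size at most two and every budget lies in $\{0,1\}$. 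Spending the unit on $h_\ell^a$ parks $i_\ell^a$ at its top choice and activates the $j_b$-branch, and vice versa, so a single unit is exactly a binary tie-break switch. A capacity count---$\hat n=(\bar n-L)+2L$ residents against $\bar n$ real slots plus the $k\le L$ used units---shows that leaving any part's budget unused forces a resident onto $\X$ at huge rank, so every optimal solution uses all $L$ units and activates exactly one branch per tie.

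Then I would prove two lemmas in the spirit of Lemmas~\ref{lemma_first_complexity} and~\ref{lemma_second_complexity}. The first turns any weakly stable $M$ into a feasible allocation and a stable $\hat M_{\mathbf t}$ with $\avg(\hat M_{\mathbf t})=\avg(M)+\Delta$, where $\Delta\in[L,2L]$ collects the parked-resident costs and an unavoidable off-by-one in the rank of the activated tie hospital. The second, the crux, shows the correspondence is tight at the optimum: since the copies in $\C$ retain the original hospital preferences, stability of $\hat M$ enforces precisely the non-tie blocking constraints of $\Gamma$, while---because a shallower assignment is always cheaper---the optimal allocation activates the branch that lets $i_\ell$ reach the better available tie hospital, a branch-switching exchange ruling out any cheaper configuration and certifying that the extracted matching is weakly stable. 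Hence $\OPT_{\hat\Gamma}$ equals $\avg(M^{\mathrm{opt}})$ plus an offset in $[L,2L]$, and NP-completeness follows as in Theorem~\ref{HR_ext_NP_complete}. For the gap, using $\avg(M)\ge \bar n\ge L$ to absorb $\Delta\le 2L$ gives $\OPT_{\hat\Gamma}^{no}/\OPT_{\hat\Gamma}^{yes}\ge \avg(M^{no})/(3\,\avg(M^{yes}))\ge \tfrac13\,\bar n^{1-\varepsilon}$, and since the constructed instance has $n=O(\bar n)$ residents this is at least $n^{1-\varepsilon'}$ for any $\varepsilon'>\varepsilon$. The main obstacle is precisely this tightness lemma: one must verify that the compact gadget creates no spurious stable matching that undercuts the optimum of {\MinW}, i.e., that no re-allocation within or across the size-two parts yields a stable matching corresponding to a cheaper, non-weakly-stable matching of $\Gamma$; this is where the branch-switching argument and the exact placement of $h_\ell^a,h_\ell^b$ in all preference lists must be checked in full.
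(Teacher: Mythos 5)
Your route is genuinely different from the paper's: the paper does not reduce from {\MinW} here at all. It first proves that the decision problem {\Maxcardexp} is NP-complete (Theorem~\ref{theorem_max_expansion}) via a hospital-splitting gadget in which each tied hospital becomes a size-two part $\{j',j''\}$ of capacity-$0$ copies with budget $1$, and then applies the Manlove-style amplification to that decision problem: $n^{a-1}$ disjoint copies of the {\Maxcardexp} instance plus $n^{a}$ mutually-first-ranked filler pairs, so that a YES instance admits a solution of cost at most $n^{a+2}/2$ while a NO instance forces, in every copy, some resident past the filler block to rank above $n^{a}$, yielding the $n^{1-\varepsilon}$ gap directly in terms of the $2n^{a}$ residents of the constructed instance. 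Your plan of a cost-preserving constant-size gadget applied directly to {\MinW} is a legitimately different architecture, but as written it has a concrete flaw.

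The flaw is the deletion-based tie-break. Removing $j_a$ from the list of the active branch resident $i_\ell^b$ does not simulate breaking the tie $(j_a,j_b)$ in favour of $j_b$; it declares the pair $(i_\ell,j_a)$ unacceptable, which is a different instance with a different set of stable matchings. If $i_\ell^b$ ends up matched at rank $3$ or worse while $j_a$ has a free slot or holds a resident it ranks below $i_\ell$, then $(i_\ell,j_a)$ blocks the corresponding matching of $\Gamma$ under weak stability, yet nothing blocks in $\hat\Gamma_{\mathbf t}$: $i_\ell^b$ does not rank $j_a$, and $i_\ell^a$ is parked at $h_\ell^a$, which it prefers to $j_a$. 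Hence $\hat\Gamma_{\mathbf t}$ admits stable matchings whose pre-images are not weakly stable in $\Gamma$, and such spurious matchings can be far cheaper than every weakly stable matching (precisely the situation engineered in the NO instances of Corollary~\ref{corollary-Manlove}, where weak stability is what forces some resident to rank above $n^{a}$). This breaks the lower bound on the optimum of a NO instance, which is the direction your inapproximability argument needs; your own "tightness lemma" is exactly where this fails. The repair is to keep both tie hospitals in each branch's list, ordered according to the branch (using the standard fact that $M$ is weakly stable iff it is stable under some tie-breaking), and to redo the rank bookkeeping. Two further points would still need care: you never specify how the hospitals of $\C$ rank the two branch residents, and the claim that an unused budget unit is prohibitively costly is quantitatively weak, since the rank of an $\X$ hospital is only $O(\bar n)$ while the NO threshold inherited from Corollary~\ref{corollary-Manlove} is of order $\bar n^{2-o(1)}$.
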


Before proving Theorem \ref{theorem_stud_set_expansion}, we need to introduce a variant of Problem \ref{problem_expansion_subsets}, where the goal is to find a stable matching whose size is at least a certain target. The problem of finding the maximum cardinality stable matching is one of  the main focus of the literature \cite{manlove2013algorithmics}. We investigate it in relation with capacity expansion when there are incomplete preference lists.
Recall that a {\CA} instance with incomplete preference lists means that there is at least one resident or one hospital that does not rank completely the opposite side. 
Formally, we consider the following problem.
\begin{problem}[{\Maxcardexp}]
\label{maxcard_expansion_subsets}
\boxxx{
{\sc instance}:  A \emph{HR} instance $\Gamma=\langle \S,\C,\succ, \mathbf{c} \rangle$ with incomplete preference lists, a partition $\P=\{\C_1,\ldots,\C_q\}$ of $\C$, budget for each part $\{ B_k \in\ZZ_+  \colon \,  k \in [q]\}$ and a non-negative integer target value $K \in \ZZ_+$.\\
{\sc question}: Is there a non-negative vector $\mathbf{t}\in\ZZ_+^{\C}$ and a matching $M_{\mathbf{t}}$ such that 
\[
\left|M_{\mathbf{t}}\right| \geq K,
\]
where $\mathbf{t}$ is such that $\sum_{j\in\C_k}t_j\leq B_k$ for each $k\in[q]$ 
and $M_\mathbf{t}$ is a stable matching in instance $\Gamma_{\mathbf{t}} $?
}
\end{problem}

Recall that if we consider complete preference lists, the problem above becomes trivial since all stable matchings have the same size. We prove the following result.
\begin{theorem}\label{theorem_max_expansion}
{\Maxcardexp} is NP-complete, even if the partition $\P=\{\C_1,\ldots,\C_q\}$ is such that each $\C_k$ is of size at most two and $B_k\in \{0,1\}$ for every $k\in[q]$.  
\end{theorem}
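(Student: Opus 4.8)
The plan is to reduce from the decision version of the maximum cardinality weakly stable matching problem in SMTI, which is NP-complete by Manlove et al.~\cite{manlove2002hard} (this is the source of the ``HRTI, maximum cardinality, NP-complete'' entry of Table~\ref{tab:our_results}). Membership in NP is immediate and in the same spirit as the opening of the proof of Theorem~\ref{HR_ext_NP_complete}: given a candidate pair $(\mathbf t,M_{\mathbf t})$ one verifies in polynomial time that $\sum_{j\in\C_k}t_j\le B_k$ for every part, that $M_{\mathbf t}$ is stable in $\Gamma_{\mathbf t}$, and that $|M_{\mathbf t}|\ge K$. As in Section~\ref{sec:exp_reduction}, I would first invoke an analogue of Remark~\ref{remark-wlog} to assume that in the source instance $\Gamma=\langle\S,\C,\succ,\one\rangle$ (all capacities one) ties occur only in residents' lists, each such list carries exactly one tie, and every tie has length two and sits at the head of the list. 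Since strict incomplete instances are polynomially solvable by the Rural Hospitals Theorem, the hardness must genuinely flow through these ties, which the gadgets below will convert into capacity-expansion choices.

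The construction turns every length-two tie into a single part of size two with budget one, while every remaining part is a singleton with budget zero; this is exactly the restricted shape claimed in the statement. For a resident $i$ whose head-tie is $(a,b)$, I would attach a gadget of fresh agents whose sole purpose is to realise, under \emph{strict} preferences and a single unit of extra capacity, the two matchings that the tie permits: the two hospitals placed in the size-two part are \emph{fresh} copies dedicated to that tie (forced by the disjointness of the partition, since a real hospital may occur in many residents' ties), and incomplete lists are used to insulate the gadget from the rest of the instance so that its hospitals cannot block with outside agents. The expansion choice ``spend the unit on the first side of the part'' versus ``on the second side'' is then engineered to coincide with resolving the tie towards $a$ versus towards $b$. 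Finally I would set the target to $K':=K+\alpha$, where $\alpha$ is the fixed number of auxiliary agents matched across all gadgets in any resolution, so that the reachable cardinality in $\hat\Gamma_{\mathbf t}$ exceeds that in $\Gamma$ by the constant $\alpha$.

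Correctness splits into the two usual directions, mirroring Lemmata~\ref{lemma_first_complexity} and~\ref{lemma_second_complexity}. For the forward direction, from a weakly stable matching $M$ of size at least $K$ in $\Gamma$ I would, gadget by gadget, spend the unit on the side that $M$ matches $i$ to, match the gadget's auxiliary agents in the unique stable way this opens up, and inherit $M$ on the strict part of the instance; checking stability pair by pair then yields a stable matching of size $|M|+\alpha\ge K'$. For the converse, a crucial simplification is that here the budget is imposed \emph{per part}, so there is no cross-gadget transfer of capacity to worry about (unlike the global-budget analysis of ``moving spots between villages'' that dominated Theorems~\ref{HR_ext_NP_complete} and~\ref{theorem_reduction_standard}): each gadget can be analysed in isolation. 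Thus any budget-feasible $\mathbf t$ and any stable matching $\hat M_{\mathbf t}$ of size at least $K'$ must, in each tie-part, spend its single unit on one of the two sides, since leaving it unused strictly decreases the reachable cardinality of that gadget by the insulation property; reading off, for each $i$, the side on which the unit was spent then defines a weakly stable matching of $\Gamma$ of size at least $K'-\alpha=K$.

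The main obstacle is the gadget design, precisely because strict preferences do not by themselves reproduce the freedom a tie grants: if $i$ strictly prefers $a$ over $b$, then matching $i$ to $b$ while $a$ has a free seat is a blocking pair, whereas under the tie it is not. The gadget must therefore use the single expandable unit together with the fresh agents to \emph{neutralise} the unchosen option---so that whichever of the two tie-targets is not selected is either occupied or rendered unattractive in the strict instance---ensuring that exactly the two tie-consistent matchings, and no spurious alternatives, survive as stable. Making this neutralisation compatible simultaneously with disjointness of the partition, with the restriction $B_k\in\{0,1\}$ and parts of size at most two, and with the requirement that ``budget unused'' costs at least one matched auxiliary agent, is the delicate step; the insulation afforded by incomplete lists is what keeps each gadget's stability check local and therefore tractable.
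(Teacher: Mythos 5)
Your NP-membership argument and your observation that per-part budgets eliminate the cross-gadget capacity-transfer analysis are both fine, but the core of the reduction is missing, and the direction you committed to makes it genuinely problematic rather than merely delicate. The paper reduces from {\Maxcard} with the length-two ties placed on the \emph{hospitals'} side, and the entire gadget is then one line: each tied hospital $j$ with list $(i_{\sigma_1},i_{\sigma_2}),i_{\sigma_3},\ldots$ is split into two capacity-zero copies $j'$ and $j''$ with strict lists $i_{\sigma_1},i_{\sigma_3},\ldots$ and $i_{\sigma_2},i_{\sigma_3},\ldots$; the part $\{j',j''\}$ with budget $1$ selects which copy exists, resident $i_{\sigma_1}$ ranks $j'$ where she ranked $j$, resident $i_{\sigma_2}$ ranks $j''$, everyone else ranks $j'$, and the target value stays $K$. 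No auxiliary agents, no insulation, no shift $K'=K+\alpha$.

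By instead fixing ties on the \emph{residents'} side (the analogue of Remark~\ref{remark-wlog}, which is the right normalization for the village construction of Theorem~\ref{HR_ext_NP_complete} but not here), you have set yourself a gadget-design problem that the budget mechanism cannot solve. Capacities are attached to hospitals, so spending the part's unit on one of two fresh hospitals $g_1,g_2$ cannot reorder resident $i$'s strict list over the real tied hospitals $a$ and $b$: whichever copy is activated, $i$ still proposes to $a$ before $b$ (or vice versa, depending on how you break the tie in her strict list), so the unselected option is not neutralised --- if $i$ ends up at the less-preferred of the two while the other has a free seat, $(i,\cdot)$ blocks, and the only way the gadget can prevent this is by absorbing $i$ into a fresh hospital, which destroys the correspondence with the original matching. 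Moreover $a$ and $b$ are real hospitals appearing in many residents' lists and possibly many ties, so they cannot themselves be placed into the dedicated size-two parts (the parts must be disjoint), which is exactly why your gadget has to act on fresh copies it cannot actually steer. You correctly identify this neutralisation as ``the delicate step,'' but it is the whole proof, and the fix is to move the ties to the side that carries the capacities.
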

The proof of this result can be found in the Appendix. Let us now focus on the proof of Theorem \ref{theorem_stud_set_expansion}.

\begin{proof}[Proof of Theorem \ref{theorem_stud_set_expansion}]

Let $\epsilon>0$ and define $a=\lceil (3 / \varepsilon)\rceil $. We consider an instance $\Gamma$ of {\Maxcardexp} in which the set of hospitals is $\C$, the set of residents is $\S$ (we assume $|\S|=n$), every $\C_k$ is of size at most two and $B_k\in \{0,1\}$ for every $k\in [q]$. We denote by $O_j$ (resp. $O_i$) the preference list of hospital $j$ (resp. resident $i$). We assume that the target value $K$ is equal to $n$.

We now build an instance $\hat\Gamma$ of {\CAexpsubOpt}. Let us define $A=n^{a-1}$. In this instance, the set of hospitals is $\left(\bigcup_{h=1}^A \C^h \right) \cup \C^0$ with $\C^h=\{j_1^h,\ldots, j_n^h \}$ and $\C^0=\{j_1^0,\ldots, j_{n^a}^0 \}$. The set of residents is $\left(\bigcup_{h=1}^A \S^h\right) \cup \S^0$ with $\S^h=\{i_1^h,\ldots, i_n^h \}$ and $\S^0=\{i_1^0,\ldots, i_{n^a}^0 \}$.
For every pair $\{\C_k,B_k\}$ for $k\in[q]$ in $\Gamma$, we introduce copies in $\hat\Gamma$ of the form $\{\C_k^h,B_k^h\}$ in $\C^h$  for $k\in [q]$ and $h\in[A]$. The hospitals in $\C^0$ have all capacity 1; the other hospitals have the same capacities of the original hospitals in $\Gamma$. For $j\in\C$ and $h\in[A]$, we denote by $O_j^h$ the preference list  obtained by substituting in the preference list $O_j$ the residents in $\S$ with the residents in $\S^h$. We define similarly $O_i^h$. The preference lists of the hospitals and residents in $\hat\Gamma$ are as follows: 
\begin{align*}
j_{h}^{0}&: i_{h}^{0},\ldots \hspace{5.2em} h\in[n^{a}] \\
j_{s}^{h}&: O_{j_s}^{h}, \S^{0},\ldots \hspace{3em} s\in[n], \, h\in[A]\\
i_{h}^{0}&: j_{h}^{0},\ldots \hspace{5.2em} h\in[n^{a}] \\
i_{s}^{h}&: O_{i_s}^{h}, \C^{0}, \ldots \hspace{3em}  s\in[n], \, h\in[A],
\end{align*}
where the dots \lq\lq\ldots \rq\rq\, in the preference lists mean that the remaining agents on the other side of the bipartition are ranked strictly and arbitrarily.

Our {\SMexpsub} instance comprises $2 n^{a}$  residents, so that $\bar{n}:=2 n^{a} $; the target value  is $K'=n^{a+2}/2$.
The remainder of the proof follows the same reasoning as of Corollary~\ref{corollary-Manlove}; the proof can be found in the Appendix.
\end{proof}

\subsection{Removing Spots from a Partition of Hospitals}\label{sec: reduction_subsets}
Similar to the problems presented in the previous section, we now study the generalization of Problem \ref{problem_reduction} where the set of hospitals is partitioned in $q$ parts and each part has a budget for the removal of spots. Specifically, we consider the following problem.

\begin{problem}[{\CAredsub}]
\label{problem_reduction_subsets}
\boxxx{
{\sc instance}:  A \emph{{\CA}} instance $\Gamma=\langle \S,\C,\succ, \mathbf{c} \rangle$, a partition $\P=\{\C_1,\ldots\C_q\}$ of $\C$, budget for each part $\{ B_k \in\ZZ_+  \colon \,  k \in [q]\}$, and a non-negative integer target value $K \in \mathbb{Z}_+$.\\
{\sc question}: Is there a non-negative vector $\mathbf{t}\in\ZZ_+^{\C}$ and a matching $M_\mathbf{t}$ such that 
\[
\text{\emph{AvgRank}}(M_{\mathbf{t}}) \leq K,
\]
where $\mathbf{t}$ is such that $\sum_{j\in\C_k} t_j\geq B_k$ and $c_j-t_j\geq 0$ for $k\in[q]$,
and $M_{\mathbf{t}}$ is a stable matching in instance $\Gamma_{\mathbf{t}} $?
}
\end{problem}
For Problem \ref{problem_reduction_subsets}, we prove the following inapproximability result.
\begin{theorem}\label{capacity_red_sub_complexity}
For any $\varepsilon>0$, {\CAredsubOpt} is not approximable within a factor of $n^{1-\varepsilon}$, unless P=NP, where $n$ is the number of residents. This result holds even with a partition in which each part $\C_k$ contains at most two hospitals and $B_k\in \{0,1\}$ for every $k\in[q]$.
\end{theorem}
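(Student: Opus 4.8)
The plan is to mirror the strategy used for the capacity expansion case, adapting the \emph{village} construction from Section~\ref{sec:exp_reduction} so that it encodes a reduction rather than an expansion of capacities, and then to reduce from {\Maxcardexp} exactly as in the proof of Theorem~\ref{theorem_stud_set_expansion}. Specifically, I would start from an instance $\Gamma$ of {\Maxcardexp} with partition $\P=\{\C_1,\ldots,\C_q\}$, each part of size at most two and each $B_k\in\{0,1\}$, and target $K=n$ (which we may assume by Theorem~\ref{theorem_max_expansion}). The idea is to build an instance $\hat\Gamma$ of {\CAredsubOpt} on a blown-up instance with $A=n^{a-1}$ parallel copies (where $a=\lceil 3/\varepsilon\rceil$), together with a set $\C^0$ of ``default'' hospitals whose sole purpose is to guarantee that residents who are not matched inside a copy incur a large penalty in the average hospital rank. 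The budget constraints $\sum_{j\in\C_k}t_j\geq B_k$ force the decision-maker to remove exactly one spot from each part that had $B_k=1$, which plays the role that adding one spot played in the expansion reduction.

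The key steps, in order, would be the following. First, I would set up the copies $\C^h,\S^h$ for $h\in[A]$ and the default sets $\C^0,\S^0$, assigning capacities so that the removal budgets in each copied part $\{\C_k^h,B_k^h\}$ mimic the incomplete-list structure of the original instance; the preference lists would be the analogues of the $O_j^h$ and $O_i^h$ lists, with each resident $i_s^h$ ranking $\C^0$ after their copied list and each hospital $j_s^h$ ranking $\S^0$ after theirs. Second, I would establish a structural correspondence: a stable matching in a copy $h$ that matches $m$ residents inside the copy (rather than defaulting to $\C^0$) corresponds, after removing the forced spots, to a stable matching of the original {\Maxcardexp} instance of size $m$. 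Here the capacity reduction is what makes certain resident-hospital pairs infeasible, exactly encoding the incomplete-preference blocking structure. Third, I would compute the average hospital rank of the resulting matching and show it takes one of two well-separated values according to whether the original answer is YES ($K=n$, all residents matched inside the copies) or NO (at least one resident per copy is forced into the costly $\C^0$ block), with the gap amplified by the factor $A$ coming from the copies. Finally, I would verify that $\bar n=2n^a$ and that the ratio of the NO-objective to the YES-objective is at least $n^{1-\varepsilon}$, using the same $f(x)=(a+x)/(b+x)$ monotonicity argument as in Theorem~\ref{HR_ext_NP_complete}.

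The main obstacle I anticipate is establishing the structural correspondence in the \emph{reduction} direction cleanly: I must argue that it is never beneficial to remove spots in a way other than the ``intended'' one dictated by a maximum-cardinality stable matching of the original instance. In the expansion case this was handled by Lemma~\ref{lemma_second_complexity} through a careful case analysis of transferring extra capacities between villages, showing each alternative transfer is sub-optimal because it forces a resident onto a very low-ranked hospital in $\Z$ or $\C^0$. The analogous claim here requires showing that any over-removal or misplaced removal either leaves a resident unmatched inside the copy (pushing them to $\C^0$ at high cost) or violates the per-part budget, and that the stability of the resulting matching is preserved; because capacities are being \emph{reduced} rather than increased, I would need to recheck that no blocking pair is created when a spot is withdrawn, which is the delicate point. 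Given that the theorem statement explicitly says the result holds under the same restricted partition structure as Theorem~\ref{theorem_stud_set_expansion}, I expect the cleanest route is to state that the construction and its analysis are entirely parallel to the expansion case and defer the detailed verification to the Appendix, exactly as the paper does for the other reduction-side theorems.
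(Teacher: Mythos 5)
Your plan is essentially the paper's own proof, which is stated in one line as ``analogous to Theorem~\ref{theorem_stud_set_expansion} with the difference that every hospital in each part $\C_k$ has capacity 1'': the same blow-up with $A=n^{a-1}$ copies, the default block $\C^0$, the target $K'=n^{a+2}/2$, and the same gap-amplification ratio argument. The only adjustment worth making is that the intermediate NP-complete problem should be {\Maxcardred} (Theorem~\ref{card_cap_sub_red_complexity}), where each hospital in a part has capacity 1 and the budget forces a removal, rather than {\Maxcardexp}; also note that the village/$\Z$-hospital transfer analysis of Lemma~\ref{lemma_second_complexity} that you flag as the delicate point belongs to the non-partition Theorems~\ref{HR_ext_NP_complete} and \ref{theorem_reduction_standard}, not to this partition-based construction.
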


To prove Theorem \ref{capacity_red_sub_complexity}, we need to study the analogous version of Problem \ref{maxcard_expansion_subsets} for the capacity reduction setting. Formally, we define the following problem.

\begin{problem}[{\Maxcardred}]
\label{maxcard_reduction_subsets}
\boxxx{
{\sc instance}:  A \emph{{\CA}} instance $\Gamma=\langle \S,\C,\succ, \mathbf{c} \rangle$ with incomplete preference lists, a partition $\P=\{\C_1,\ldots\C_q\}$ of $\C$, budget for each part $\{ B_k \in\ZZ_+  \colon \,  k \in [q]\}$, and a non-negative integer target value $K \in \mathbb{Z}_+$.\\
{\sc question}: Is there a non-negative vector $\mathbf{t}\in\ZZ_+^{\C}$ and a matching $M_\mathbf{t}$ such that 
\[
\left|M_{\mathbf{t}}\right| \geq K,
\]
where $\mathbf{t}$ is such that $\sum_{j\in\C_k} t_j\geq B_k$ and $c_j-t_j\geq 0$ for $k\in[q]$,
and $M_{\mathbf{t}}$ is a stable matching in instance $\Gamma_{\mathbf{t}} $?
}
\end{problem}
In particular, we show the following result.
\begin{theorem}\label{card_cap_sub_red_complexity}
{\Maxcardred} is NP-complete. This result holds even with a partition in which each part $\C_k$ is of size at most two and $B_k\in \{0,1\}$ for every $k\in[q]$.
\end{theorem}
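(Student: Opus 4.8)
Theorem \ref{card_cap_sub_red_complexity} asks to show that \Maxcardred{} is NP-complete, even when each part of the partition has size at most two and each budget $B_k \in \{0,1\}$.

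The plan is to reduce from \Maxcardexp{} (Problem~\ref{maxcard_expansion_subsets}), which is shown to be NP-complete in Theorem~\ref{theorem_max_expansion} under exactly the restrictions we want to preserve (parts of size at most two, budgets in $\{0,1\}$). The intuition is that removing a spot from a hospital in the reduction setting should play the same structural role as \emph{not} adding a spot in the expansion setting, so the two problems ought to be complementary in a way that is easy to exploit. Concretely, given an instance of \Maxcardexp{} with partition $\P=\{\C_1,\dots,\C_q\}$, capacities $\mathbf{c}$, expansion budgets $B_k\in\{0,1\}$, and cardinality target $K$, I would build an instance of \Maxcardred{} on the same residents, hospitals, and preference lists, but with inflated initial capacities: for each part $\C_k$ with $B_k=1$, I would pre-allocate the extra spot to \emph{both} (or the relevant) hospitals in $\C_k$, and then require the reduction budget to force exactly one spot back out. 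The reduction budget for part $k$ would be chosen so that $\sum_{j\in\C_k}t_j \geq B_k^{\text{red}}$ together with the capacity upper bounds leaves the solver a choice of which hospital within $\C_k$ keeps the extra spot—mirroring the expansion choice of which hospital receives it.

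First I would make this correspondence precise: define a bijection between feasible expansion vectors $\mathbf{t}^{\text{exp}}$ (satisfying $\sum_{j\in\C_k}t_j^{\text{exp}}\leq B_k$) and feasible reduction vectors $\mathbf{t}^{\text{red}}$ in the constructed instance, so that the resulting capacity profiles $\mathbf{c}+\mathbf{t}^{\text{exp}}$ and $\hat{\mathbf{c}}-\mathbf{t}^{\text{red}}$ coincide on every hospital. Since the residents, hospitals, preference lists, and incompleteness structure are identical, the set of stable matchings in $\Gamma_{\mathbf{t}^{\text{exp}}}$ equals the set of stable matchings in $\hat\Gamma_{-\mathbf{t}^{\text{red}}}$ for corresponding capacity profiles, and in particular the maximum attainable matching size is the same. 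Keeping the cardinality target $K$ unchanged then yields: the \Maxcardexp{} instance is a YES-instance if and only if the \Maxcardred{} instance is. Membership in NP is immediate, since given $\mathbf{t}$ and $M_{\mathbf{t}}$ one checks stability in $\Gamma_{-\mathbf{t}}$, the per-part budget lower bounds, the non-negativity constraints $c_j-t_j\geq 0$, and $|M_{\mathbf{t}}|\geq K$ all in polynomial time.

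The main obstacle I anticipate is handling parts with $B_k=0$ and the interaction between the per-part \emph{lower}-bound constraint $\sum_{j\in\C_k}t_j\geq B_k$ (reduction) versus the \emph{upper}-bound constraint in expansion, together with the hard constraint $c_j-t_j\geq 0$. For a part with $B_k=1$ containing two hospitals, I must ensure the reduction instance genuinely forces a choice of \emph{which} hospital loses its spot rather than collapsing both capacities, and that a part with $B_k=0$ is not inadvertently forced to shed a spot. This requires setting $\hat{\mathbf{c}}$ and the reduction budgets so that feasibility exactly recovers the expansion degree of freedom—care is needed because the reduction budget is a lower bound while capacities cap how much can be removed, so I would verify that the Rural-Hospital-style invariance of stable-matching size under these equal capacity profiles holds under incomplete lists (which it does, by the Rural Hospital Theorem cited in the introduction), guaranteeing the cardinality correspondence is exact rather than merely inequality-preserving.
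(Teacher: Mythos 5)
Your strategy---reducing from {\Maxcardexp} by inflating capacities and letting the reduction budget re-create the expansion choice---is a genuinely different route from the paper's. The paper instead goes back to {\Maxcard} (HRTI) and reuses the gadget of Theorem~\ref{theorem_max_expansion} essentially verbatim: both copies $j'$ and $j''$ of a tied hospital are given capacity $1$ and one spot must be removed from the part $\C_j=\{j',j''\}$ with $B_j=1$, so the surviving copy encodes which tied resident the original hospital accepts. Your route is legitimate in principle (Theorem~\ref{theorem_max_expansion} supplies an NP-hard source problem under exactly the restrictions you need, and NP membership is immediate), but as written it has a gap.

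The gap is the claimed bijection between feasible expansion vectors and feasible reduction vectors: it does not exist. Because the per-part reduction constraint $\sum_{j\in\C_k}t_j\geq B_k$ is a \emph{lower} bound and the only cap is $c_j-t_j\geq 0$, your constructed instance admits strictly more capacity profiles than the expansion instance. In a pair $\C_k=\{j_1,j_2\}$ with both capacities inflated and reduction budget $1$, the solver may remove a spot from both hospitals, or several from one; in a part with budget $0$ it may still remove spots. Worse, for a singleton part with $B_k=1$, inflating $c_j$ by one and setting the reduction budget to $1$ makes the profile $c_j+1$ unattainable, breaking even the YES direction (you would need budget $0$ there). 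So the equivalence cannot rest only on ``equal capacity profiles give equal stable matchings.'' What rescues the argument is that every reduction-feasible profile is componentwise dominated by some expansion-feasible profile, so it suffices that the common size of the stable matchings is monotone non-decreasing in the capacity vector. That is true, but it is a comparative-statics fact in the spirit of the entry results of Gale--Sotomayor and Kominers cited in the paper, \emph{not} a consequence of the Rural Hospital Theorem, which only asserts that all stable matchings of a fixed instance have the same size. You must state and prove (or properly cite) this monotonicity lemma and spell out the per-part budget choices case by case; once you do, your reduction closes. The paper's gadget sidesteps most of this because over-removal in its specific instance visibly destroys the cardinality target, so the dominated profiles can be dismissed by inspection.
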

\begin{proof}
The proof is analogous to the proof of Theorem~\ref{theorem_max_expansion} with the difference that every hospital in each part $\C_k$ has capacity 1.
\end{proof}

\begin{proof}[Proof of Theorem~\ref{capacity_red_sub_complexity}]
The proof follows a similar reasoning as the proof of Theorem~\ref{theorem_stud_set_expansion} with the difference that every hospital in each part $\C_k$ has capacity 1.
\end{proof}


\section{Conclusions}
\label{sec:conclusions}

In this work, we have investigated the following question: How should a centralized institution optimally manage a variation in the capacities of the hospitals? 
We addressed this question from two points of view: Capacity expansion and capacity reduction. Our analysis is focused on the computational complexity of these problems and some of its variations. 

Our first result established the approximation hardness of the problem of finding the resident-optimal stable matching in the presence of ties. Our theorem defined a boundary on the complexity of the resident-optimal stable matching, which is well known to be polynomial-time solvable when there are no ties. We used this result as the first building block in the construction of the main proof of the paper: The approximation hardness of the problem of allocating optimally extra capacities to the hospitals to reduce the average hospital rank. Our proof introduced a crucial structure, the \emph{village}, that enabled us to manage the subtleties of the allocation of extra capacities. The problem of allocating extra resources is not easier when we restrict the distribution of capacities to a partition of the hospitals. If the objective of the problem is the cardinality of the stable matching, we proved that it is NP-complete when the problem has incomplete lists. If the objective is the average hospital rank, the corresponding optimization problem cannot be approximated within a certain factor.  The problem of reducing the capacities is equally interesting. Indeed, we showed that the capacity reduction problem is NP-complete. We generalized this result to the case in which the set of hospitals is partitioned and there is a budget for each part. For this problem, we proved that its optimization version is also inapproximable within a certain factor. Finally, we studied the variant of the problem that seeks to maximize the cardinality of the matching when the preference lists are incomplete. 

We believe these results are significant because they emphasize the existence of an underlying structure in the stable matching problem which governs both the capacity expansion and reduction. Unveiling the properties of this structure is certainly an open question worth being explored. Another interesting future direction of research is understanding what is the role of meta-rotations \cite{gusfield1989stable,bansal2007polynomial,cheng2008unified} in the capacity variation problem. 




\section*{Acknowledgments}
This work was funded by the Institut de valorisation des donn\'ees and Fonds de Recherche du Qu\'ebec through the FRQ–IVADO Research Chair in Data Science for Combinatorial Game Theory, and the Natural Sciences and Engineering Research Council of Canada through the discovery grant 2019-04557. Part of the work was conducted when the third author was Canada Excellence Research Chair at Polytechnique Montr\'eal, the generous support of the CERC grant being warmly acknowledged.

\bibliographystyle{plain}
{\small \bibliography{mybibfile}}

\appendix
\section{Missing proofs}\label{sec:appendix}
The following problem will be useful for the proofs that we provide in this Appendix. 
\begin{problem}[{\Maxcard}]
\label{problem_maxcar}
\boxxx{
{\sc instance}:  An \emph{HRTI} instance $\Gamma = \langle \S,\C,\succ, \mathbf{c} \rangle$ with $c_j\in\{0,1\}$, for all $j \in \C$, $| \C|=|\S|$ and a non-negative integer target value $K \in \ZZ_+$.\\
{\sc question}: Is there a weakly stable matching $M$ such that $\vert M \vert \geq K$?
}
\end{problem}
Recall that HRTI corresponds to the problem with ties and incomplete preference lists. In~\cite{manlove2002hard}, the authors proved that {\Maxcard} is NP-complete. As the next remark states, this result holds even if ties are at the head of the preference list, only on one side of it, at most one tie per list, and each tie is of length 2.

\begin{remark}\label{remark-wlog}
After the proof of Lemma 1 in \cite{manlove2002hard}, the authors showed that the problem {\Maxcard} 
can be simplified to the case in which ties are only on one side of the bipartition and are at the end of the preference list. Since the ties of the new instance created in Lemma 1 from \cite{manlove2002hard} are at most two, we can use the same reasoning to assume instead, without loss of generality, that in an instance of {\Maxcard} and the corresponding {\MinW} instance of Corollary~\ref{corollary-Manlove} ties occur only at the \emph{head} of a preference list.  
\end{remark}

\subsection{Proof of Corollary \ref{corollary-Manlove}}

In this section, we prove that {\MinW} is NP-complete and its optimization version cannot be approximated within a certain factor.  The proof is inspired by the proof of Theorem 7 in~\cite{manlove2002hard}. The result in~\cite{manlove2002hard} is stated in the traditional notation of the stable marriage problem where both sides are defined as women and men, instead of residents and hospitals. To keep coherence with the previous work, for this proof we also denote both sides as women and men.

\begin{proof}[Proof of Corollary \ref{corollary-Manlove}]
Clearly, {\MinW} is in NP.  Given $\varepsilon>0$, let $a= \lceil (3 / \varepsilon)\rceil .$ From Theorem 2 in \cite{manlove2002hard}, we know that, when ties occur on the women's side only, and each tie has length two, {\Maxcard} is NP-complete. Consider an instance of Problem~\ref{problem_maxcar} with $\C=\left\{m_{1}, m_{2}, \ldots, m_{n}\right\}$  and $\S=\left\{w_{1}, w_{2}, \ldots, w_{n}\right\}$. We assume that the target value $K$ is equal to $n$, since it was shown that even for this target value the problem is NP-complete. Let $O_{h}$ (resp. $R_{h}$) denote the preference list of man $m_{h}$ (resp. woman $w_{h}$) for $h\in[n]$. Next, we build an instance of {\MinW}. Let $C:=n^{ a-1 }$, then 
 \begin{itemize}
     \item the set of men is $\C'=\C^{0} \cup\left(\bigcup_{h=1}^{C} \C^{h}\right)$ with $ \C^{0}=\left\{m_{1}^{0}, m_{2}^{0}, \ldots, m_{n^a}^{0}\right\}$ and $\C^{h}=\left\{m_{1}^{h}, m_{2}^{h}, \ldots, m_{n}^{h}\right\}$ for $h\in[C]$;
     \item the set of women is $\S'=\S^{0} \cup\left(\bigcup_{h\in[C]} \S^{h}\right)$ with
    $\S^{0}=\left\{w_{1}^{0}, w_{2}^{0}, \ldots, w_{n^a}^{0}\right\}$ and $\S^{h}=\left\{w_{1}^{h}, w_{2}^{h}, \ldots, w_{n}^{h}\right\}$ for $h\in[C]$;
    \item for each $h\in[n]$ and $s\in[C]$, let $O_{h}^{s}$ be the preference list obtained from $O_{h}$ by replacing woman $w_{k}$ in $O_{h}$ by the corresponding woman $w_{k}^{s}$, for every  $k\in[n]$. We refer to the women in $O_{h}^{s}$ as the proper women for $m_{h}^{s} .$ Similarly, we define $R_{h}^{s}$ and the proper men for $w_{h}^{s}$. The preference lists for $\C'$ and $\S'$ are 
\begin{align*}
    m_{h}^{0}&: w_{h}^{0} \ldots  \hspace{4.7em} h\in[n^{a}] \\
    m_{h}^{s}&: O_{h}^{s}, \S^{0} \ldots \hspace{3em} h\in[n], \, s\in[C]\\
    w_{h}^{0}&: m_{h}^{0} \ldots  \hspace{4.6em} h\in[n^{a}] \\
    w_{h}^{s}&: R_{h}^{s},\C^0 \ldots \hspace{3em}  h\in[n], \ s\in[C]
\end{align*}
    where the dots \lq\lq\ldots \rq\rq\, in the preference lists mean that the remaining agents on the other side of the bipartition are ranked strictly and arbitrarily, and the sets mean that the agents within are ranked according to their indices;
    \item the target value is $K'=(n^{a+2})/2$.
 \end{itemize}
 
Our {\MinW} instance comprises $2 n^{a}$ men and $2 n^{a}$ women, so that $\bar{n}:=2 n^{a} $. Note also that the only ties in {\MinW} occur in the preference lists of women $w_{h}^{s}$ for $h\in[n], \ s\in[C]$. Moreover, there is at most one tie per list, and each tie has length 2. 

Suppose that we have a YES instance for {\Maxcard}, \ie, there is a stable matching $M$ with $|M|=n$. 
We create a matching $M^{\prime}$ in {\MinW} as follows: For every $h\in[n^a]$, we add the pair $\left(m_{h}^{0}, w_{h}^{0}\right)$ to $M^{\prime}$, and for each $s\in[n]$, we add the pair $\left(m_{s}^{\ell}, w_{k}^{\ell}\right)$ to $M^{\prime}$ for all $ \ell\in[C]$, where $\left(m_{s}, w_{k}\right) \in M $. Note that $M^{\prime}$ is
stable for our {\MinW} instance. We also have that
\[
\avg\left(M^{\prime}\right) \leq n^a + n^{a-1} n^{2} \leq \frac{n^{a+2}}{2}=K',
\]
since, without loss of generality, we may choose $n \geq 3$. Therefore, the objective value in  {\MinW} satisfies the target of $K'$.

On the other side, let us suppose that we have a NO instance for {\Maxcard}, \ie, it does not have a stable matching of cardinality $n$. Then, in any stable matching $M^{\prime}$ of {\MinW}, it holds that, for every $s\in[C]$, there is some $h\in[n]$ for which $w_{h}^{s}$ is not matched to one of her proper men. 
Nonetheless, in $M^{\prime}, m_{h}^{0}$ and $w_{h}^{0}$ must be partners, for every $h\in[n^{a}]$. Therefore, there is some $h\in[n]$ such that  $\text{rank}_{w_{h}^{s}}\left( M'(w_{h}^{s}) \right)>n^{a} $. 
Hence, $\avg\left(M^{\prime}\right)> n^{2 a-1}>K'$ for any stable matching of our {\MinW} instance. 

Therefore, the existence of a polynomial-time approximation algorithm for {\MinWOpt}  whose approximation ratio is as good as $\left(2 n^{2 a-1}\right) / n^{a+2}=2 n^{a-3}$ would give a polynomial-time algorithm for determining whether {\Maxcard} has a stable matching in which everybody is matched (\ie, $K=n$). To conclude, we note that $2 n^{a-3}=\left(2 / 2^{1-3 / a}\right) \bar{n}^{1-3 / a}>\bar{n}^{1-3 / a} > \bar{n}^{1-\varepsilon}$, which ends the proof.
\end{proof}

\subsection{Proof of Theorem \ref{theorem_max_expansion}}
In this section, we prove that {\Maxcardexp} is NP-complete.

\begin{proof}[Proof of Theorem \ref{theorem_max_expansion}]
We build  a polynomial reduction from an instance of {\Maxcard} where ties are only on the hospital side, they are at the head of the preference list and are of length two. Let $\C$ and $\S$ be the set of hospitals and residents in $\Gamma$, respectively; $\C=\C'\cup \C''$, where $\C'$ is the set of hospitals with a tie at the head of the preference list and $\C''$ is the set of hospitals with a strict preference list. 

We build an instance $\hat\Gamma=\langle \hat\S,\hat\C,\hat\succ, \hat{\mathbf{c}} \rangle$ of {\Maxcardexp} as follows: 
\begin{itemize}
    \item The set of residents is a copy of $\hat\S=\S$;
    \item The set of hospitals $\hat\C$ consists of a copy of $\C''$ and the set $\tilde \C = \{ j': \ j \in \C'\} \cup \{j'': \ j\in \C' \}$, \ie, we make two copies per hospital in $\C'$. Each hospital in $\tilde \C$ has capacity 0 and each hospital in $\C''$ has capacity 1; 
    \item For each resident in $\hat\S$, we keep the preference list that she has in the original instance $\Gamma$, with the exception that each $j \in \C'$ in her preference lists is replaced by $j'$ if she does not appear in the tie. If she is the first resident listed in the tie of $j\in \C'$, then we replace the hospital $j$ in the preference list by $j'$; otherwise, if the resident is listed second in the tie of $j\in \C'$, then we replace the hospital $j$ in the preference list by $j''$;
    \item For the hospitals in $\C''$, we maintain their preference lists of $\Gamma$ over the residents in $\S$. For a hospital $j\in \C'$ with a preference list $(i_{\sigma_1}, i_{\sigma_2}),i_{\sigma_3},\ldots,i_{\sigma_s}$,  the preference list of $j'$ becomes $i_{\sigma_1},i_{\sigma_3},\ldots,i_{\sigma_s}$ and that of $j''$ becomes $i_{\sigma_2},i_{\sigma_3},\ldots,i_{\sigma_s}$;
    \item For each hospital $j\in \C''$, we create a set $\C_j=\{j\}$ with $B_j =0$. For every hospital $j\in \C'$, we create a set $\C_j=\{ j', j''\}$ with $B_j =1$. Clearly, the sets $\C_j$ induce a partition of the set of hospitals $\hat\C$. 
    \item The target value is $K$, \ie, the same as in the {\Maxcard} instance.
\end{itemize}




Let $M$ be a weakly stable matching of the {\Maxcard} instance. We will show that there is a feasible allocation of the capacities $\mathbf{t}$ and a stable matching $M_{\mathbf{t}}$ in $\hat{\Gamma}_{\mathbf{t}}$ with the same cardinality, and thus, establishing the problems equivalence.  For every pair $(i,j)$ in $M$, we have to distinguish whether $j\in \C'$ or $j\in \C''$. If $j\in \C''$, then we add the corresponding pair $(i,j)$ to $M_{\mathbf{t}}$; recall that for a hospital $j\in\C''$, $B_j = 0$. 
Otherwise, $j\in \C'$. If $i\neq i_{\sigma_2}$, then we allocate the extra capacity of part $\C_j$ to $j'$ and we match the pair $(i,j')$. If, instead, $i= i_{\sigma_2}$, then we match the pair $(i,j'')$ by assigning the extra capacity of part $\C_j$ to $j''$. If there is a hospital $j\in \C'$ that has not been assigned to any resident, then we may allocate the extra capacity of part $\C_j$ to $j'$.\footnote{Alternatively, we may leave un-assigned the extra capacity $B_j$ for every unassigned hospital $j\in \C'$ in $M$.}

Note that $M_{\mathbf{t}}$ is stable indeed. If not, there must be a blocking pair $(i,j)$ where both the resident and the hospital have a capacity of 1 (otherwise, a hospital with capacity 0 could not create a blocking pair). Note that $j$ must be in some $\C_k$ given that those subsets form a partition of $\hat\C$.  Indeed, in each set $\C_k$  exactly one hospital has capacity 1, and for $k=j$, $j$ is exactly such hospital. If $|\C_j|=1$, then $j\in\C''$ and, thus, it has exactly the same preference list that it has in the instance $\Gamma$; therefore the corresponding pair $(i,j)$ in $M$ is a blocking pair, which yields a contradiction. If $|\C_j|=2$, then we have to distinguish whether $j=j'$ or $j=j''$. If $j=j'$, then we find that $(i,j)$ is a blocking pair in $M$. Otherwise, if $j=j''$, then $(i,j)$ is a blocking pair if and only if $i= i_{\sigma_2}$ since it is the only resident ranking $j''$ in $\hat\Gamma$.  The pair $(i,j'')$ could be a blocking pair only if $j''$ has capacity 1; the extra capacity $B_j=1$ was assigned to $j''$ in accordance with the reduction. Therefore $(i,j'')$ is already matched in $M_{\mathbf{t}}$ and $(i,j'')$ cannot be a blocking pair. 

Note that we have created a bijection between the set of stable matchings in the {\Maxcard} instance  and the allocation of extra spots as well as the set of stable matchings in the {\Maxcardexp} instance modulo the  stable matchings in  the {\Maxcardexp} instance that have some unassigned hospitals of the form $j'$ or $j''$. Moreover, this correspondence preserves the cardinality of the stable matching. 

To conclude, note that the created instance introduces a polynomial number of hospitals, residents, preferences and pairs $\{(\C_j, B_j)\}_{j\in \C}$ in the input. Moreover, it can be verified in polynomial time that: (1) the vector of allocation $\mathbf{t}$ satisfies the corresponding constraints and (2) the constructed stable matching has a cardinality greater or equal than the target value. 
\end{proof}

\end{document}